\begin{document}

\toappear{ISSAC'16, July 19--22, 2016, Waterloo, ON, Canada. \\
  ACM ISBN.
  DOI: http://dx.doi.org/10.1145/2930889.2930928}

\CopyrightYear{2016}
\setcopyright{acmlicensed}
\conferenceinfo{ISSAC'16,}{July 19--22, 2016, Waterloo, ON, Canada}
\isbn{978-1-4503-4380-0/16/07}\acmPrice{\$15.00}
\doi{http://dx.doi.org/10.1145/2930889.2930928}

\title{Fast Computation of Minimal Interpolation Bases \\ in Popov Form for Arbitrary Shifts}
\numberofauthors{2}
\author{
\alignauthor
Claude-Pierre Jeannerod \\
  \affaddr{Inria, Universit\'e de Lyon \\
  Laboratoire LIP (CNRS, Inria, ENSL, UCBL) \\
  claude-pierre.jeannerod@ens-lyon.fr}
\alignauthor
  Vincent Neiger \\
  \affaddr{ENS de Lyon, Universit\'e de Lyon \\
  Laboratoire LIP (CNRS, Inria, ENSL, UCBL) \\
  vincent.neiger@ens-lyon.fr}
\and
\alignauthor
\'Eric Schost \\
  \affaddr{University of Waterloo \\
  David R. Cheriton School of Computer Science \\
  eschost@uwaterloo.ca}
\alignauthor
Gilles Villard\\
  \affaddr{CNRS, Universit\'e de Lyon \\
  Laboratoire LIP (CNRS, Inria, ENSL, UCBL) \\
  gilles.villard@ens-lyon.fr}
}

\renewcommand\leq\leqslant
\newcommand{\storeArg}{}
\newcounter{notationCounter}
\newcommand{\bigO}[1]{\mathcal{O}(#1)} 
\newcommand{\bigOPar}[1]{\mathcal{O}\left(#1\right)} 
\newcommand{\softO}[1]{\mathcal{O}\tilde{~}(#1)} 
\newcommand{\polmultime}[1]{\mathsf{M}(#1)}
\newcommand{\polmultimePar}[1]{\mathsf{M}{#1}}
\newcommand{\polmatmultime}[1]{\mathsf{MM}(#1)}
\newcommand{\polmatmultimePrime}[1]{\mathsf{MM'}(#1)}
\newcommand{\polmatmultimeBis}[1]{\mathsf{MM''}(#1)}
\newcommand{\polmatmultimePrimeDnc}[1]{\mathsf{\overline{MM'}}(#1)}
\newcommand{\polmatmultimeBisDnc}[1]{\mathsf{\overline{MM''}}(#1)}
\newcommand{\expmatmul}{\omega} 
\newcommand{\algoname}[1]{{\normalfont\textsc{#1}}}
\renewcommand{\ge}{\geqslant} 
\renewcommand{\le}{\leqslant} 
\newcommand{\bzero}{\mathbf{0}} 
\newcommand{\Zint}[1]{\{0,\ldots,#1-1\}} 
\newcommand{\ZZ}{\mathbb{Z}} 
\newcommand{\NN}{\mathbb{N}} 
\newcommand{\ZZp}{\mathbb{Z}_{> 0}} 
\newcommand{\NNp}{\mathbb{N}_{> 0}} 
\newcommand{\var}{X} 
\newcommand{\field}{\mathbb{K}} 
\newcommand{\polSpace}{\mathfrak{F}} 
\newcommand{\polRing}{\field[\var]} 
\newcommand{\matSpace}[1][\rdim]{\renewcommand\storeArg{#1}\matSpaceAux} 
\newcommand{\polMatSpace}[1][\rdim]{\renewcommand\storeArg{#1}\polMatSpaceAux} 
\newcommand{\matSpaceAux}[1][\storeArg]{\field^{\storeArg \times #1}} 
\newcommand{\polMatSpaceAux}[1][\storeArg]{\polRing^{\storeArg \times #1}} 
\newcommand{\row}[1]{\mathbf{\MakeLowercase{#1}}} 
\newcommand{\col}[1]{\mathbf{\MakeLowercase{#1}}} 
\newcommand{\mat}[1]{\mathbf{\MakeUppercase{#1}}} 
\newcommand{\matCoeff}[1]{\MakeLowercase{#1}} 
\newcommand{\vecc}[1]{\mathbf{#1}} 
\newcommand{\trsp}[1]{#1^\mathsf{T}} 
\newcommand{\rdim}{m} 
\newcommand{\cdim}{n} 
\newcommand{\spol}[1][1]{\setcounter{notationCounter}{#1}\addtocounter{notationCounter}{15} \Alph{notationCounter}} 
\newcommand{\smat}[1][1]{\setcounter{notationCounter}{#1}\mat{\Alph{notationCounter}}} 
\newcommand{\spolmat}[1][1]{\setcounter{notationCounter}{#1}\mathbf{\Alph{notationCounter}}} 
\newcommand{\matrow}[2]{{#1}_{#2,*}} 
\newcommand{\matcol}[2]{{#1}_{*,#2}} 
\newcommand{\idMat}[1][\rdim]{\mat{I}_{#1}} 
\newcommand{\shiftMat}[1]{\mat{\var}^{#1\,}} 
\newcommand{\leadingMat}[2][\unishift]{\mathrm{lm}_{#1}(#2)} 
\newcommand{\shift}[2][s]{#1_{#2}} 
\newcommand{\shifts}[1][s]{\mathbf{#1}} 
\newcommand{\sshifts}[1][\shifts]{|#1|} 
\newcommand{\shiftSpace}[1][\rdim]{\ZZ^{#1}} 
\newcommand{\unishift}{\mathbf{0}} 
\newcommand{\rdeg}[2][]{\mathrm{rdeg}_{{#1}}(#2)} 
\newcommand{\order}{\sigma} 
\newcommand{\linFunc}{\ell} 
\newcommand{\linFuncs}{\mathcal{L}} 
\renewcommand{\int}{\row{p}} 
\newcommand{\intBasis}{\mat{P}} 
\newcommand{\intSpace}[1][\rdim]{\polMatSpace[#1][#1]} 
\newcommand{\evMat}{\mat{E}} 
\newcommand{\evRow}{\row{e}} 
\newcommand{\evSpace}[2][\rdim]{\matSpace[#1][#2]}
\newcommand{\mul}{\cdot} 
\newcommand{\mulmat}{\jordan} 
\newcommand{\jordan}{\mat{J}} 
\newcommand{\mulshift}{\mat{Z}} 
\newcommand{\ord}[1]{\mathrm{ord}{(#1)}} 
\newcommand{\nbbl}{n} 
\newcommand{\szbl}{\order} 
\newcommand{\evpt}{x} 
\newcommand{\module}{\mathfrak{M}}
\newcommand{\kernel}{\mathfrak{K}}
\newcommand{\intMod}[1][\evMat,\mulmat]{\mathfrak{I}(#1)}
\newcommand{\rowvec}[1][1]{\setcounter{notationCounter}{#1}\addtocounter{notationCounter}{21} \row{\alph{notationCounter}}} 
\newcommand{\krylov}[2][]{\mathcal{K}_{#1}(#2)} 
\newcommand{\linPolMat}[2][]{\mathcal{E}_{#1}(#2)} 
\newcommand{\polFromLin}[2][]{\mathcal{C}_{#1}(#2)} 
\newcommand{\pivmat}[2][]{\mathcal{P}_{#1}(#2)} 
\newcommand{\tgtmat}[2][]{\mathcal{T}_{#1}(#2)} 
\newcommand{\relmat}[2][]{\mathcal{R}_{#1}(#2)} 
\newcommand{\prioPerm}[1][\shifts]{\mat{\pi}_{#1}}
\newcommand{\prioEval}{\psi_{\shifts}}
\newcommand{\prioIndex}[1][\shifts]{\phi_{#1}}
\newcommand{\minDeg}{\delta}
\newcommand{\minDegs}{\boldsymbol{\delta}}
\newcommand{\rep}[2]{\mathrm{rep}(#1,#2)}
\newcommand{\evMatF}{\mat{F}} 
\newcommand{\evRowF}{\row{f}} 
\newcommand{\supp}{\mu}
\newcommand{\nbpt}{p}
\newcommand{\bY}{ {\boldsymbol{Y}} }
\newcommand{\by}{ {\boldsymbol{y}} }
\newcommand{\bj}{ {\boldsymbol{j}} }
\newcommand{\be}{ {\boldsymbol{e}} }
\newcommand{\bi}{ {\boldsymbol{i}} }
\newcommand{\sbj}{|\bj|}
\newcommand{\expSet}{\Gamma}
\newcommand{\nvars}{r}
\newcommand{\points}{\mathcal{P}}
\newcommand{\supps}{\mathcal{M}}
\newcommand{\sumVec}[1]{|#1|} 
\newtheorem{pbm}{Problem}
\newtheorem{algo}{Algorithm}
\newtheorem{dfn}{Definition}[section]
\newtheorem{thm}[dfn]{Theorem}
\newtheorem{cor}[dfn]{Corollary}
\newtheorem{prop}[dfn]{Proposition}
\newtheorem{lem}[dfn]{Lemma}

\newcommand{\expandMat}{\mathcal{E}}
\newcommand{\quoExp}{\alpha}
\newcommand{\remExp}{\beta}
\newcommand{\degExp}{\delta}
\newcommand{\expand}[1]{\overline{#1}}

\maketitle

\begin{abstract}
We compute minimal bases of solutions for a general interpolation problem,
which encompasses Hermite-Pad\'e approximation and constrained multivariate
interpolation, and has applications in coding theory and security.

This problem asks to find univariate polynomial relations between $\rdim$
vectors of size $\order$; these relations should have small degree with respect
to an input degree shift.
For an arbitrary shift, we propose an algorithm for the computation of an
interpolation basis in shifted Popov normal form with a cost of
$\softO{\rdim^{\expmatmul-1} \order }$ field operations, where $\expmatmul$ is
the exponent of matrix multiplication and the notation $\softO{\cdot}$
indicates that logarithmic terms are omitted.

Earlier works, in the case of Hermite-Pad\'e approximation~\cite{ZhoLab12} and
in the general interpolation case~\cite{JeNeScVi15}, compute non-normalized
bases. Since for arbitrary shifts such bases may have size $\Theta(\rdim^2
\order)$, the cost bound $\softO{\rdim^{\expmatmul-1} \order}$ was feasible
only with restrictive assumptions on the shift that ensure small output sizes.
The question of handling arbitrary shifts with the same complexity bound was
left open.

To obtain the target cost for any shift, we strengthen the properties of the
output bases, and of those obtained during the course of the algorithm: all the
bases are computed in shifted Popov form, whose size is always $\bigO{\rdim
\order}$. Then, we design a divide-and-conquer scheme. We recursively reduce
the initial interpolation problem to sub-problems with more convenient shifts
by first computing  information on the degrees of the intermediate bases.  
\end{abstract}

%
%

\keywords{M-Pad\'e approximation; Hermite-Pad\'e approximation; order basis;
polynomial matrix; shifted Popov form.}

\vspace{0.3cm}

\section{Introduction}
\label{sec:intro}

\vspace{-0.2cm}

\subsection{Problem and main result}
\label{subsec:problem_results}

We focus on the following interpolation problem
from~\cite{BarBul92,BecLab00}. For a field $\field$ and some positive integer
$\sigma$, we have as input $\rdim$ vectors $\row{e}_1,\ldots,\row{e}_\rdim$ in
$\field^{1\times \order}$, seen as the rows of a matrix $\evMat \in
\matSpace[\rdim][\order]$. We also have a \emph{multiplication matrix} $\mulmat
\in \matSpace[\order]$ which specifies the multiplication of vectors $\row{e}
\in \matSpace[1][\order]$ by polynomials $p \in \polRing$ as $p \mul \row{e} =
\row{e}\, p(\mulmat)$. Then, we want to find $\polRing$-linear relations
between these vectors, that is, some $\int = (p_1,\ldots,p_\rdim) \in
\polRing^\rdim$ such that $\int \mul \evMat = p_1 \mul \row{e}_1 + \cdots +
p_\rdim \mul \row{e}_\rdim = 0$. Such a $\int$ is called an \emph{interpolant
for $(\evMat,\mulmat)$}.

Hereafter, the matrix $\mulmat$ is in Jordan canonical form: this assumption is
satisfied in many interesting applications, as explained below. The notion of
interpolant we consider is directly related to the one introduced
in~\cite{BarBul92,BecLab00}. Suppose that $\mulmat$ has $\nbbl$ Jordan blocks
of dimensions $\szbl_1\times\szbl_1, \ldots, \szbl_\nbbl\times\szbl_\nbbl$ and
with respective eigenvalues $\evpt_1, \ldots, \evpt_\nbbl$; in particular,
$\order = \szbl_1 + \cdots + \szbl_\nbbl$.  Then, one may identify
$\field^\order$ with \vspace{-0.05cm}\[\polSpace=\polRing/(X^{\szbl_1}) \times
\cdots \times \polRing/(X^{\szbl_\nbbl}),\vspace{-0.05cm}\] by mapping any
$\row{f}=(f_1,\dots,f_\nbbl)$ in $\polSpace$ to the vector $\row{e} \in
\field^\order$ made from the concatenation of the coefficient vectors of
$f_1,\dots,f_\nbbl$. Over $\polSpace$, the $\polRing$-module structure on
$\field^\order$ given by $p \mul \row{e} = \row{e}\, p(\mulmat)$ becomes
\vspace{-0.05cm}\[p \cdot \row{f} = (p(X+\evpt_1) f_1 \bmod X^{\szbl_1},
\dots,p(X+\evpt_\nbbl) f_\nbbl \bmod X^{\szbl_n}).\vspace{-0.05cm}\] Now, if
$(\row{e}_1,\dots,\row{e}_\rdim) \in \field^{\rdim \times \sigma}$ is
associated to $(\row{f}_1,\dots,\row{f}_\rdim) \in \polSpace^\rdim$, with
$\row{f}_i = ({f}_{i,1},\dots,{f}_{i,\nbbl})$ and ${f}_{i,j}$ in
$\polRing/(X^{\szbl_j})$ for all $i,j$, the relation $p_1 \mul \row{e}_1 +
\cdots + p_\rdim \mul \row{e}_\rdim = 0$ means that for all $j$ in
$\{1,\dots,\nbbl\}$, we have \vspace{-0.05cm}\[p_1(X+\evpt_j) f_{1,j} + \cdots
  + p_\rdim(X+\evpt_j) f_{\rdim,j} = 0 \bmod X^{\sigma_j};\vspace{-0.05cm}\]
  applying a translation by $-\evpt_j$, this is equivalent to
\vspace{-0.05cm}
\begin{equation*}
p_1 f_{1,j}(X-x_j) + \cdots + p_\rdim
f_{\rdim,j}(X-x_j) = 0 \bmod (X-x_j)^{\sigma_j}.  
\end{equation*}
\vspace{-0.05cm}
Thus, in terms of vector M-Pad\'e approximation as in~\cite{BarBul92,BecLab00},
$(p_1,\ldots,p_\rdim)$ is an interpolant for
$(\row{f}_1,\ldots,\row{f}_\rdim)$, $\evpt_1, \ldots, \evpt_\nbbl$, and
$\szbl_1, \ldots, \szbl_\nbbl$.

The set of all interpolants for $(\evMat,\mulmat)$ is a free $\polRing$-module
of rank $\rdim$. We are interested in computing a basis of this module,
represented as a matrix in $\intSpace$ and called an \emph{interpolation basis
for $(\evMat,\mulmat)$}. Its rows are interpolants for $(\evMat,\mulmat)$, and
any interpolant for $(\evMat,\mulmat)$ can be written as a unique
$\polRing$-linear combination of its rows.

Besides, we look for interpolants that have some type of minimal degree.
Following~\cite{BarBul92,ZhoLab12}, for a nonzero $\int = [p_1,\ldots,p_\rdim]
\in \polMatSpace[1][\rdim]$ and a \emph{shift} $\shifts =
(\shift{1},\ldots,\shift{\rdim}) \in \shiftSpace$, we define the
\emph{$\shifts$-degree} of $\int$ as $\max_{1\le j\le \rdim} (\deg(p_j) +
\shift{j})$. Up to a change of sign, this notion of $\shifts$-degree is
equivalent to the one in~\cite{BeLaVi06} and to the notion of \emph{defect}
from~\cite[Definition~3.1]{BecLab94}.

Then, the \emph{$\shifts$-row degree} of a matrix $\mat{P} \in
\polMatSpace[k][\rdim]$ of rank~$k$ is the tuple $\rdeg[\shifts]{\mat{P}} =
(d_1,\ldots,d_k) \in \shiftSpace[k]$ with $d_i$ the $\shifts$-degree of the
$i$-th row of $\mat{P}$. The \emph{$\shifts$-leading matrix} of $\mat{P} =
[p_{ij}]_{i,j}$ is the matrix in $\matSpace[k][\rdim]$ whose entry $(i,j)$ is
the coefficient of degree $d_i - \shift{j}$ of $p_{ij}$.  Then, $\mat{P}$ is
\emph{$\shifts$-reduced} if its $\shifts$-leading matrix has rank~$k$;
see~\cite{BeLaVi06}.

Our aim is to compute an \emph{$\shifts$-minimal} interpolation basis for
$(\evMat,\mulmat)$, that is, one which is $\shifts$-reduced: equivalently, it
is an interpolation basis whose $\shifts$-row degree, once written in
nondecreasing order, is lexicographically minimal. This corresponds to
Problem~\ref{pbm:mib} below. In particular, an interpolant of minimal degree
can be read off from an $\shifts$-minimal interpolation basis for the
\emph{uniform} shift $\shifts = \unishift$.

\begin{center}
\fbox{ \begin{minipage}{7.9cm}
\begin{pbm}[Minimal interpolation basis]
  \label{pbm:mib}
~\\
\emph{Input:}
\begin{itemize}
  \item the base field $\field$,
  \item the dimensions $\rdim$ and $\order$,
  \item a matrix $\evMat \in \evSpace{\order}$,
  \item a Jordan matrix $\mulmat \in \matSpace[\order]$,
  \item a shift $\shifts\in\shiftSpace$.
\end{itemize}

\emph{Output:} an $\shifts$-minimal interpolation basis
for $(\evMat,\mulmat)$.
\end{pbm}
\end{minipage}
}
\end{center}

A well-known particular case of this problem is Hermite-Pad\'e approximation,
that is, the computation of \emph{order bases} (or $\order$-bases, or minimal
approximant bases), where $\mulmat$ has only eigenvalue $0$. Previous work on
this case includes~\cite{BecLab94,GiJeVi03,Storjohann06,ZhoLab12} with
algorithms focusing on $\mulmat$ with $\cdim$ blocks of identical size
$\order/\cdim$. For a shift $\shifts \in \NN^\rdim$ with nonnegative entries,
we write $\sshifts$ for the sum of its entries. Then, in this context, the cost
bound $\softO{ \rdim^{\expmatmul-1} \order}$ has been obtained under each of
the following assumptions:
\begin{enumerate}[($H_1$)]
  \item $\max(\shifts) - \min(\shifts) \in \bigO{\order/\rdim}$ in
\cite[Theorem~5.3]{ZhoLab12} and more generally
$\sshifts[\shifts-\min(\shifts)] \in \bigO{\order}$ in
\cite[Section~4.1]{Zhou12};
  \item $\sshifts[\!\max(\shifts)-\shifts] \in \bigO{\order}$ in
\cite[Theorem~6.14]{ZhoLab12}.
\end{enumerate}
These assumptions imply in particular that any $\shifts$-minimal basis has size
in $\bigO{\rdim \order}$, where by \emph{size} we mean the number of field
elements used to represent the matrix.

An interesting example of a shift not covered by ($H_1$) or ($H_2$) is
$\shifts[h] = (0,\order,2\order,\ldots,(\rdim-1)\order)$ which is related to
the Hermite form \cite[Lemma~2.6]{BeLaVi06}. In general, as detailed in
Appendix~\ref{app:shift}, one may assume without loss of generality that
$\min(\shifts) = 0$, $\max(\shifts) \in \bigO{\rdim \order}$, and $\sshifts \in
\bigO{\rdim^2 \order}$. 

There are also applications of Problem~\ref{pbm:mib} to multivariate
interpolation, where $\mulmat$ is not nilpotent anymore, and for which we have
neither $(H_1)$ nor $(H_2)$, as we will see in
Subsection~\ref{subsec:applications}. It was left
as an open problem in~\cite[Section~7]{ZhoLab12} to obtain algorithms with cost
bound $\softO{\rdim^{\expmatmul-1} \order}$ for such matrices $\mulmat$ and for
arbitrary shifts. In this paper, we solve this open problem.

An immediate challenge is that for an arbitrary shift $\shifts$, the size of an
$\shifts$-minimal interpolation basis may be beyond our target cost: we show
this in Appendix~\ref{app:large_size_mib} with an example of Hermite-Pad\'e
approximation. Our answer is to compute a basis in \emph{$\shifts$-Popov form}:
among its many interesting features, it can be represented using at most $\rdim
(\order+1)$ elements from $\field$, and it is canonical: for every nonsingular
$\mat{A} \in \polMatSpace[\rdim]$ and $\shifts \in \shiftSpace$, there is a
unique matrix $\mat{P}$ in $\shifts$-Popov form which is left-unimodularly
equivalent to $\mat{A}$. We use the definition from~\cite[Section~7]{BecLab00},
phrased using the notion of pivot~\cite[Section~6.7.2]{Kailath80}.

\begin{dfn}[Pivot of a row]
\label{dfn:pivot}
Let $\row{p} = [p_j]_j \in \polMatSpace[1][\rdim]$ be a nonzero row vector and
let $\shifts \in \shiftSpace$. The \emph{$\shifts$-pivot index} of $\row{p}$ is
the largest index $j \in \{1,\ldots,\rdim\}$ such that $\rdeg[\shifts]{\row{p}}
= \deg(p_j) + \shift{j}$; then, $p_j$ and $\deg(p_j)$ are called the
\emph{$\shifts$-pivot entry} and the \emph{$\shifts$-pivot degree} of
$\row{p}$.
\end{dfn}

\begin{dfn}[Popov form]
\label{dfn:popov}
Let $\mat{P} \in \polMatSpace[\rdim]$ be nonsingular and let $\shifts \in
\shiftSpace$. Then, $\mat{P}$ is said to be in \emph{$\shifts$-Popov form} if
its $\shifts$-pivot entries are monic and on its diagonal, and if in each
column of $\mat{P}$ the nonpivot entries have degree less than the pivot entry.
\end{dfn}

We call \emph{$\shifts$-Popov interpolation basis for $(\evMat,\mulmat)$} the
unique interpolation basis for $(\evMat,\mulmat)$ which is in $\shifts$-Popov
form; in particular, it is an $\shifts$-minimal one. For small values of
$\order$, namely $\order \in \bigO{\rdim}$, we gave
in~\cite[Section~7]{JeNeScVi15} an algorithm which computes the $\shifts$-Popov
interpolation basis in $\softO{\order^{\expmatmul-1} \rdim}$ operations for an
arbitrary~$\shifts$ \cite[Theorem~1.4]{JeNeScVi15}. Hence, in what follows, we
focus on the case $\rdim \in \bigO{\order}$.

We use the convenient assumption that $\mulmat$ is given to us as a
list of eigenvalues and block sizes:
\begin{equation*}
\mulmat=((x_1,\sigma_{1,1}),\dots,(x_1,\sigma_{1,r_1}),\dots,(x_t,\sigma_{t,1}),\dots,(x_t,\sigma_{t,r_t})),  
\end{equation*}
for some pairwise distinct eigenvalues $x_1,\dots,x_t$, with $r_1 \ge \cdots
\ge r_t$ and $\sigma_{i,1} \ge \cdots \ge \sigma_{i,r_i}$ for all $i$; we say
that this representation is \emph{standard}.

\begin{thm}
  \label{thm:pib}
  Assuming that $\mulmat\in\matSpace[\order]$ is a Jordan matrix given by a
  standard representation, there is a deterministic algorithm which solves
  Problem~\ref{pbm:mib} using
  \begin{align*}
  \bigO{ & \rdim^{\expmatmul-1} \polmultime{\order}
  \log(\order) \log(\order/\rdim)^2 } & \text{if }
  \expmatmul>2, \\
  \bigO{ & \rdim \polmultime{\order} \log(\order)
  \log(\order/\rdim)^2 \log(\rdim)^3 
  } & \text{if } \expmatmul=2\;
  \end{align*}
  operations in $\field$ and returns the $\shifts$-Popov interpolation basis
  for $(\evMat,\mulmat)$.
\end{thm}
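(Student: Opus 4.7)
The plan is a divide-and-conquer recursion on the order $\order$, with base case $\order\in\bigO{\rdim}$ handled by \cite[Theorem~1.4]{JeNeScVi15}, which computes the $\shifts$-Popov interpolation basis for an arbitrary shift in $\softO{\order^{\expmatmul-1}\rdim}\subseteq\softO{\rdim^{\expmatmul-1}\order}$ operations. For the recursive step, I partition the standard representation of $\mulmat$ into two halves of total orders $\order_1+\order_2=\order$ with $\order_i\approx\order/2$; this induces a splitting $(\evMat_1,\mulmat_1)$ and $(\evMat_2,\mulmat_2)$ of the input. I first compute the Popov interpolation basis $\intBasis_1$ for $(\evMat_1,\mulmat_1)$, then form the residual matrix $\intBasis_1\mul\evMat_2$ under the $\mulmat_2$-action, compute a Popov basis $\intBasis_2$ for this residual, and return $\intBasis_2\intBasis_1$ after normalization into $\shifts$-Popov form. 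Correctness follows from the classical ``product of partial bases'' identity used in Hermite--Pad\'e and order-basis computations.

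The main obstacle is that, if the sub-problem for $\intBasis_1$ uses the input shift $\shifts$, then the second sub-problem must use the shift $\shifts+\rdeg[\shifts]{\intBasis_1}$, which for arbitrary $\shifts$ may be so unbalanced that neither $(H_1)$ nor $(H_2)$ applies and the intermediate basis grows to size $\Theta(\rdim^2\order_2)$, ruining the target cost. My answer is to precompute the tuple $\minDegs\in\NN^\rdim$ of pivot degrees of the sought $\shifts$-Popov basis and then work throughout with the shift $-\minDegs$. By uniqueness of the Popov form, the $\shifts$-Popov basis coincides with the $(-\minDegs)$-Popov basis; moreover, under the shift $-\minDegs$, every output row has $(-\minDegs)$-degree $0$ (the pivot in column $i$ contributes $\delta_i-\delta_i$ and all non-pivot entries in that column have degree $<\delta_i$). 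Hence all recursive sub-problems stay balanced: on a sub-problem of order $\order'$ the intermediate Popov basis has size $\bigO{\rdim\order'}$, and the residual shift passed to the second half satisfies $(H_1)$ with slack $\bigO{\order'/\rdim}$.

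To obtain $\minDegs$ itself, I run the same divide-and-conquer recursion in a preliminary pass that only propagates pivot-degree tuples rather than full polynomial matrices: at each level, once the pivot degrees of $\intBasis_1$ are known, the second sub-problem is set up with the balanced shift obtained by subtracting them, its pivot degrees are recovered analogously, and the two tuples are merged into those of $\intBasis_2\intBasis_1$ by a cheap combinatorial rule based on the column-degree identity $\sum_i\delta_i=\order$. This degree-only phase fits within the same budget as the main phase. Once $\minDegs$ is known, the main computation performs at each of its $\bigO{\log(\order/\rdim)}$ recursion levels a constant number of $\rdim\times\rdim$ polynomial matrix products of average degree $\order/\rdim$, each costing $\softO{\rdim^{\expmatmul-1}\order}$ by the usual block-decomposition reduction to $\polmultime{\order}$. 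Summing over levels and folding in the logarithmic factors from the base case and from polynomial multiplication yields the bounds stated for $\expmatmul>2$ and $\expmatmul=2$.
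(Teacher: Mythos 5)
Your high-level strategy — recurse in Popov form, deduce the $\shifts$-minimal degree $\minDegs$, then exploit the shift $-\minDegs$ — is the same as the paper's, but two of your key steps are incorrect as stated and would not yield a working algorithm.

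First, the ``preliminary pass that only propagates pivot-degree tuples rather than full polynomial matrices'' cannot exist. Pivot degrees are not a combinatorial invariant of the input sizes: they depend on the actual entries of $\evMat$. More concretely, to set up the second sub-problem you must compute the residual $\intBasis^{(1)}\mul\evMat$, and this requires the full polynomial matrix $\intBasis^{(1)}$, not just its pivot degrees. Hence your degree-only phase is necessarily as expensive as computing the bases themselves. The paper avoids this redundancy by having a single recursion (\algoname{PopovMIB}) return \emph{both} the Popov basis and its pivot degrees; the recursive calls supply $\minDegs^{(1)},\minDegs^{(2)}$, which combine to $\minDegs = \minDegs^{(1)}+\minDegs^{(2)}$ by Lemmas~\ref{lem:pivdegs} and~\ref{lem:invariant}. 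What \emph{is} cheap is the merge rule $\minDegs^{(1)}+\minDegs^{(2)}$, not obtaining the summands.

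Second, you claim that once $\minDegs$ is known, working with the shift $-\minDegs$ makes ``all recursive sub-problems stay balanced'' (you even invoke $(H_1)$ with slack $\bigO{\order'/\rdim}$), after which ordinary $\rdim\times\rdim$ polynomial matrix products at each level suffice. This is too optimistic. The shift $-\minDegs$ only satisfies the weaker $(H_2)$-type bound $\sumVec{\max(-\minDegs)-(-\minDegs)}\le\order$ (Lemma~\ref{lem:mindeg}), and $\minDegs$ may be extremely skewed (e.g.\ $\minDegs=(\order,0,\ldots,0)$), in which case $(H_1)$ fails badly and a direct recursion with the shifted algorithm of~\cite{JeNeScVi15} costs $\softO{\rdim^\expmatmul\order}$, as the paper explicitly notes. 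The paper's answer is the expansion-compression / partial-linearization step in \algoname{KnownMinDegMIB} (Lemma~\ref{lem:known_mindeg_pib}): split each column of the output into $\bigO{\rdim}$ blocks of width $\lceil\order/\rdim\rceil$, enlarge the problem to $\expand{\rdim}\le 2\rdim$ rows with a genuinely balanced shift, call \algoname{MinimalInterpolationBasis} once, then normalize and compress back. Your sketch is missing this linearization entirely, and without it the cost bound does not follow. Relatedly, the paper deliberately avoids forming the product $\intBasis^{(2)}\intBasis^{(1)}$ (which may have size $\Theta(\rdim^2\order)$) and instead recomputes $\intBasis$ from scratch via \algoname{KnownMinDegMIB}; the product-based cost accounting in your last paragraph is therefore not what the argument needs.
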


In this result, $\polmultime{\cdot}$ is such that polynomials of degree at most
$d$ in $\polRing$ can be multiplied using $\polmultime{d}$ operations in
$\field$, and $\polmultime{\cdot}$ satisfies the super-linearity properties
of~\cite[Chapter~8]{vzGathen13}. It follows from~\cite{CanKal91} that
$\polmultime{d}$ can be taken in $\bigO{d \log(d) \log(\log(d))}$. The exponent
$\expmatmul$ is so that we can multiply $\rdim \times \rdim$ matrices in
$\bigO{\rdim^\expmatmul}$ ring operations on any ring, the best known bound
being $\expmatmul < 2.38$~\cite{CopWin90, LeGall14}.

Compared to our work in~\cite{JeNeScVi15}, our algorithm here has two key new
features:
\begin{itemize}
  \setlength\itemsep{0cm}
  \item it supports arbitrary shifts with a cost $\softO{\rdim^{\expmatmul-1}
  \order}$;
  \item it computes the basis in $\shifts$-Popov form.
\end{itemize}
To the best of our knowledge, no algorithm for Problem~\ref{pbm:mib} with cost
$\softO{\rdim^{\expmatmul-1} \order}$ was known previously for \emph{arbitrary}
shifts, even for the specific case of order basis computation.

If $\mulmat$ is given as an arbitrary list $( (\evpt_1,\szbl_1), \ldots,
(\evpt_\nbbl,\szbl_\nbbl) )$, we can reorder it (and permute the columns of
$\evMat$ accordingly) to obtain an equivalent standard representation in time
$\bigO{\polmultime{\order} \log(\order)^3}$~\cite[Proposition~12]{BoJeSc08}; if
$\field$ is equipped with an order, and if we assume that comparisons take unit
time, this can of course be done in time $\bigO{\order \log(\order)}$.

\subsection{Overview of our approach}
\label{subsec:overview} 

Several previous algorithms for order basis computation, such as those
in~\cite{BecLab94,GiJeVi03}, follow a divide-and-conquer scheme inspired by the
Knuth-Sch\"onhage-Moenck algorithm~\cite{Knu70,Sch71,Moe73}
This paper builds on our previous work in~\cite{JeNeScVi15}, where we extended
this recursive approach to more general interpolation problems.  However, the
main algorithm in~\cite{JeNeScVi15} does not handle an arbitrary shift
$\shifts$ with a satisfactory complexity; here, we use it as a black box, after
showing how to reduce the  problem to a new one with suitable shift. 

Let $\evMat$, $\mulmat$, and $\shifts$ be our input, and write $\mulmat^{(1)}$
and $\mulmat^{(2)}$ for the $\order/2\times\order/2$ leading and trailing
principal submatrices of~$\mulmat$. First, compute an $\shifts$-minimal
interpolation basis $\intBasis^{(1)}$ for $\mulmat^{(1)}$ and the first
$\order/2$ columns of $\evMat$; then, compute the last $\order/2$ columns
$\evMat^{(2)}$ of the \emph{residual} $\intBasis^{(1)} \mul \evMat$; then,
compute a $\shifts[t]$-minimal interpolation basis $\intBasis^{(2)}$ for
$(\evMat^{(2)},\mulmat^{(2)})$ with $\shifts[t] =
\rdeg[\shifts]{\intBasis^{(1)}}$; finally, return the matrix product
$\intBasis^{(2)} \intBasis^{(1)}$. 

This approach allows  to solve Problem~\ref{pbm:mib} using
$\softO{\rdim^\expmatmul \order}$ operations in $\field$. In the case of
Hermite-Pad\'e approximation, this is the divide-and-conquer algorithm
in~\cite{BecLab94}. Besides, an $\shifts$-minimal basis computed by this
method has degree at most $\order$ and thus size in $\bigO{\rdim^2 \order}$,
and there are indeed instances of Problem~\ref{pbm:mib} for which this size
reaches $\Theta(\rdim^2 \order)$. In Appendix~\ref{app:large_size_mib}, we show
such an instance for the algorithm in~\cite{BecLab94}, in the case of
Hermite-Pad\'e approximation.

It is known that the average degree of the rows of any $\shifts$-minimal
interpolation basis is at most $(\order + \xi)/\rdim$, where $\xi =
\sshifts[\shifts-\min(\shifts)]$~\cite[Theorem~4.1]{BarBul92}.
In~\cite{JeNeScVi15}, focusing on the case where $\xi$ is small compared to
$\order$, and preserving such a property in recursive calls via changes of
shifts, we obtained the cost bound 
\begin{equation}\label{eq:costJeNeScVi15}
   \bigO{  \rdim^{\expmatmul-1} \polmultime{\order} \log(\order)
   \log(\order/\rdim) + \rdim^{\expmatmul-1} \polmultime{\xi} \log(\xi/\rdim)}
\end{equation}
to solve Problem~\ref{pbm:mib}; this cost is for $\expmatmul>2$, and a similar
one holds for $\expmatmul=2$, both being in $\softO{\rdim^{\expmatmul-1}
(\order + \xi)}$. The fundamental reason for this kind of improvement over
$\softO{\rdim^\expmatmul \order}$, already seen with~\cite{ZhoLab12},  is that
one controls the average row degree of the bases $\intBasis^{(2)}$ and
$\intBasis^{(1)}$, \emph{and} of their product $\intBasis^{(2)}
\intBasis^{(1)}$.

This result is $\softO{\rdim^{\expmatmul-1} \order}$ for $\xi$ in $\bigO
\order$. The main difficulty to extend it to any shift $\shifts$ is to control
the size of the computed bases: the Hermite-Pad\'e example pointed out above
corresponds to $\xi =\Theta(\rdim\order)$ and leads to an output of size
$\Theta(\rdim^2 \order)$ for the algorithm of~\cite{JeNeScVi15} as well.

The key ingredient to control this size is to work with bases in
$\shifts$-Popov form: for any $\shifts$, the $\shifts$-Popov interpolation
basis $\intBasis$ for $(\evMat,\mulmat)$ has average \emph{column} degree at
most $\order/\rdim$ and size at most $\rdim(\order+1)$, as detailed in
Section~\ref{sec:algo}.

Now, suppose that we have computed recursively the bases $\intBasis^{(2)}$ and
$\intBasis^{(1)}$ in $\shifts$- and $\shifts[t]$-Popov form; we want to output
the $\shifts$-Popov form $\intBasis$ of $\intBasis^{(2)} \intBasis^{(1)}$. In
general, this product is not normalized and may have size $\Theta(\rdim^2
\order)$: its computation is beyond our target cost. Thus, one main idea is
that we will \emph{not} rely on polynomial matrix multiplication to combine the
bases obtained recursively; instead, we use a minimal interpolation basis
computation for a shift that has good properties as explained below.

An important remark is that if we know \emph{a priori} the column degree
$\minDegs$ of $\intBasis$, then the problem becomes easier. This idea was
already used in algorithms for the Hermite form~$\mat{H}$ of a polynomial
matrix~\cite{GupSto11,Zhou12}, which first compute the column degree $\minDegs$
of $\mat{H}$, and then obtain $\mat{H}$ as a submatrix of some minimal
nullspace basis for a shift involving $-\minDegs$.

In Section~\ref{sec:known_degrees}, we study the problem of computing the
$\shifts$-Popov interpolation basis $\intBasis$ for $(\evMat,\mulmat)$ having
its column degree $\minDegs$ as an additional input. We show that this reduces
to the computation of a $\shifts[d]$-minimal interpolation basis $\mat{R}$ with
the specific shift $\shifts[d] = -\minDegs$. The properties of this shift~$\shifts[d]$ 
allow us first to compute $\mat{R}$ in
$\softO{\rdim^{\expmatmul-1}\order}$ operations using the partial linearization
framework from~\cite[Section~3]{Storjohann06} and the minimal interpolation
basis algorithm in~\cite[Section~3]{JeNeScVi15}, and second to easily retrieve
$\intBasis$ from $\mat{R}$.

Still, in general we do not know $\minDegs$. We will thus compute it, relying on a variation of 
the divide-and-conquer strategy at the beginning of this subsection.  We
stop the recursion as soon as $\order \le \rdim$, in which case we do not need~$\minDegs$ 
to achieve efficiency: the algorithm
from~\cite[Section~7]{JeNeScVi15} computes the $\shifts$-Popov interpolation
basis in $\softO{\order^{\expmatmul-1} \rdim}$ operations 
for any~$\shifts$~\cite[Theorem~1.4]{JeNeScVi15}. Then, we show in
Section~\ref{sec:mindeg_invariant} that from $\intBasis^{(1)}$ and
$\intBasis^{(2)}$ computed recursively \emph{in shifted Popov form}, we can
obtain $\minDegs$ for free. Finally, instead of considering $\intBasis^{(2)}
\intBasis^{(1)}$, we use the knowledge of $\minDegs$ to compute the basis
$\intBasis$ from scratch as explained in the previous paragraph.

This summarizes our main algorithm, which is presented in
Section~\ref{sec:algo}.

\subsection{Previous work and applications}
\label{subsec:applications}

As a particular case of Problem~\ref{pbm:mib}, when all the eigenvalues of
$\mulmat$ are zero, we obtain the following complexity result about \emph{order
basis} computation~\cite[Definition~2.2]{ZhoLab12}.

\begin{thm}
  \label{thm:order_basis}
  Let $\rdim,\cdim \in \ZZp$, let $(\order_1,\ldots,\order_\cdim) \in
  \ZZp^\cdim$, let $\shifts \in \shiftSpace$, and let $\mat{F} \in
  \polMatSpace[\rdim][\cdim]$
  with its $j$-th column $\matcol{\mat{F}}{j}$ of
  degree less than $\order_j$. The unique basis $\intBasis \in \intSpace$ in
  $\shifts$-Popov form of the $\polRing$-module of approximants
  \[ \{ \row{p} \in \polMatSpace[1][\rdim] \;\mid\;
    \row{p} \matcol{\mat{F}}{j} = 0 \bmod X^{\order_j} \text{ for each } j \}
  \] can be computed deterministically
  using
  \begin{align*}
  \bigO{ & \rdim^{\expmatmul-1} \polmultime{\order}
  \log(\order) \log(\order/\rdim)^2 } & \text{if }
  \expmatmul>2, \\
  \bigO{ & \rdim \polmultime{\order} \log(\order)
  \log(\order/\rdim)^2 \log(\rdim)^3 
  } & \text{if } \expmatmul=2\;
  \end{align*}
  operations in $\field$, where $\order = \order_1 + \cdots + \order_\cdim$.
\end{thm}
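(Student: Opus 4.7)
The plan is to reduce Theorem~\ref{thm:order_basis} to Problem~\ref{pbm:mib} via a nilpotent Jordan matrix, and then apply Theorem~\ref{thm:pib} as a black box.

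First, I would build the inputs. Let $\mulmat \in \matSpace[\order]$ be the block-diagonal Jordan matrix with $\cdim$ nilpotent blocks of respective sizes $\order_1,\ldots,\order_\cdim$ (all eigenvalues equal to $0$), and let $\evMat \in \matSpace[\rdim][\order]$ be obtained by concatenating, for $j=1,\ldots,\cdim$, the $\rdim\times\order_j$ block whose rows are the coefficient vectors (padded to length $\order_j$) of the entries of $\matcol{\mat{F}}{j}$, which by assumption have degree less than $\order_j$. Constructing $\evMat$ costs $\bigO{\rdim \order}$ operations; sorting the blocks by decreasing size to obtain a standard representation of $\mulmat$ (with a matching column permutation on $\evMat$ and on $\mat{F}$) is negligible, since it involves only $\cdim$ integers.

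The second step is to verify that interpolants for $(\evMat,\mulmat)$ coincide with the order-basis relations, i.e., with the rows $\row{p} \in \polMatSpace[1][\rdim]$ such that $\row{p}\,\matcol{\mat{F}}{j} = 0 \bmod \var^{\order_j}$ for each $j$. This is a direct specialization of the module-structure description in Section~\ref{subsec:problem_results} with $\evpt_j = 0$ and a single block of size $\order_j$ per $j$: the identification of $\field^\order$ with $\polSpace$ sends the $i$-th row of $\evMat$ to $(f_{i,1},\ldots,f_{i,\cdim})$, where $f_{i,j}$ is the $(i,j)$-entry of $\mat{F}$ taken modulo $\var^{\order_j}$, and the condition $p_1 \mul \row{e}_1 + \cdots + p_\rdim \mul \row{e}_\rdim = 0$ becomes, for each $j$, $p_1 f_{1,j} + \cdots + p_\rdim f_{\rdim,j} \equiv 0 \bmod \var^{\order_j}$.

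Finally, I would apply Theorem~\ref{thm:pib} to $(\evMat,\mulmat,\shifts)$. Since the interpolation module for $(\evMat,\mulmat)$ equals the approximant module as a $\polRing$-submodule of $\polMatSpace[1][\rdim]$, and the $\shifts$-Popov form depends only on this submodule, the returned basis is exactly the claimed $\intBasis$, within the cost stated in Theorem~\ref{thm:pib}. The only real point of care is confirming the module isomorphism in both directions; this is immediate since the identification merely repackages coefficients and the $\polRing$-action was built to match multiplication modulo $\var^{\order_j}$ on each component. I expect this step to be entirely routine: the genuine difficulty lies in Theorem~\ref{thm:pib} itself, not in this reduction.
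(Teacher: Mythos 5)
Your reduction is exactly how the paper obtains this result: Theorem~\ref{thm:order_basis} is presented as the special case of Problem~\ref{pbm:mib} in which $\mulmat$ is nilpotent with blocks of sizes $\order_1,\ldots,\order_\cdim$, and the cost follows directly from Theorem~\ref{thm:pib}. The proposal is correct and matches the paper's approach, including the routine identification of interpolants with approximants and the negligible cost of setting up a standard representation.
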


Previous work on this problem
includes~\cite{BecLab94,GiJeVi03,Storjohann06,ZhoLab12,JeNeScVi15}, mostly with
identical orders $\order_1 = \cdots = \order_\cdim$; an interesting particular
case is Hermite-Pad\'e approximation with $\cdim=1$. To simplify matters, for
all our comparisons, we consider $\expmatmul>2$. For order basis computation
with $\order_1 = \cdots = \order_\cdim$ and $\cdim \le \rdim$, the cost bound
$\bigO{ \rdim^{\expmatmul} \polmultime{\order/\rdim} \log(\order/\cdim) }$ was
achieved in~\cite{ZhoLab12} under either of the assumptions $(H_1)$ and $(H_2)$
on the shift. Still, the corresponding algorithm returns a basis $\intBasis$
which is only $\shifts$-reduced, and because both the shift $\shifts$ and the
degrees in $\intBasis$ may be unbalanced, one cannot directly rely on the
fastest known normalization algorithm~\cite{SarSto11} to compute the
$\shifts$-Popov form of $\intBasis$ within the target cost.

Another application of Problem~\ref{pbm:mib} is a multivariate interpolation
problem that arises for example in the first step of algorithms for the
list-decoding of Parvaresh-Vardy codes~\cite{ParVar05} and of folded
Reed-Solomon codes~\cite{GurRud08}, as well as in robust Private Information
Retrieval~\cite{DeGoHe12a}. The bivariate case corresponds to the interpolation
steps of K\"otter and Vardy's soft-decoding~\cite{KoeVar03a} and Guruswami and
Sudan's list-decoding~\cite{GurSud99} algorithms for Reed-Solomon codes.

Given a set of points in $\field^{\nvars+1}$ and associated multiplicities,
this problem asks to find a multivariate polynomial $Q(X,Y_1,\ldots,Y_\nvars)$
such that: $(a)$ $Q$ has prescribed exponents for the $Y$ variables, so that
the problem can be linearized with respect to $Y$, leaving us with a linear
algebra problem over $\polRing$; $(b)$ $Q$ vanishes at all the given points
with their multiplicities, inducing a structure of $\polRing$-module on the set
of solutions; $(c)$ $Q$ has some type of minimal weighted degree, which can be
seen as the minimality of the shifted degree of the vector over $\polRing$ that
represents $Q$.

Following the coding theory context~\cite{GurSud99,ParVar05}, given a point
$(x,y) \in \field \times \field^{\nvars}$ and a set of exponents $\supp \subset
\NN^{\nvars+1}$, we say that the polynomial $Q(X,Y) \in
\field[X,Y_1,\ldots,Y_\nvars]$ \emph{vanishes at $(x,y)$ with multiplicity
support $\supp$} if the shifted polynomial $Q(X+x, Y+y)$ has no monomial with
exponent in $\supp$. We will only consider supports that are \emph{stable under
division}, meaning that if $(\gamma_0,\gamma_1,\ldots,\gamma_\nvars)$ is in
$\supp$, then any $(\gamma'_0,\gamma'_1,\ldots,\gamma'_\nvars)$ with $\gamma'_j
\le \gamma_j$ for all $j$ is also in $\supp$. 

Now, given a set of exponents $\expSet \subset \NN^\nvars$, we represent
$Q(X,Y) = \sum_{\gamma \in \expSet} p_\gamma Y^\gamma$ as the row $\row{p} =
[p_\gamma]_{\gamma \in \expSet} \in \polMatSpace[1][\rdim]$ where $\rdim$ is
the cardinality of $\expSet$. Again, we assume that the exponent set $\expSet$
is stable under division; then, the set of solutions is a free
$\polRing$-module of rank $\rdim$. In the mentioned applications, we typically
have $\expSet = \{ (\gamma_1,\ldots,\gamma_\nvars) \in \NN^\nvars \;\mid\;
\gamma_1 + \cdots + \gamma_\nvars \le \ell\}$ for an integer $\ell$ called the
\emph{list-size parameter}.

Besides, we are given some weights $\shifts[w] =
(\shift[w]{1},\ldots,\shift[w]{\nvars}) \in \NN^\nvars$ on the variables $Y =
Y_1,\ldots,Y_\nvars$, and we are looking for $Q(X,Y)$ which has minimal
$\shifts[w]$-weighted degree, which is the degree in $X$ of the polynomial
\begin{align*}
 Q(X,& X^{\shift[w]{1}}Y_1,\ldots,X^{\shift[w]{\nvars}}Y_\nvars) \\
 & = \sum_{\gamma \in \expSet} p_\gamma X^{\gamma_1 \shift[w]{1} + \cdots +
 \gamma_\nvars \shift[w]{\nvars}} Y_1^{\gamma_1} \cdots
 Y_\nvars^{\gamma_\nvars}.
\end{align*}
This is exactly requiring that the $\shifts$-degree of $\row{p} =
[p_\gamma]_\gamma$ be minimal, for $\shifts = [\gamma_1 \shift[w]{1} + \cdots +
\gamma_\nvars \shift[w]{\nvars}]_\gamma$. We note that it is sometimes
important, for example in~\cite{DeGoHe12a}, to return a whole $\shifts$-minimal
interpolation basis and not only one interpolant of small $\shifts$-degree. 

\begin{figure}[ht]
  \centering
\fbox{ \begin{minipage}{7.9cm}
\begin{pbm}[Multivariate interpolation]
  \label{pbm:multi_int}
~\\
\emph{Input:}
  \begin{itemize}
  \item number of $Y$ variables $\nvars>0$,
  \item set $\expSet \subset \NN^\nvars$ of cardinality $\rdim$,
  stable under division,
  \item pairwise distinct points  $\{ (\evpt_k,y_k) \in \field \times \field^{\nvars} \}_{1 \le
  k \le \nbpt}$,
  \item supports $\{ \supp_k \subset \NN^{\nvars+1} \}_{1 \le k \le \nbpt}$, stable under division,
  \item a shift $\shifts\in\shiftSpace$.
  \end{itemize}

\emph{Output:} a matrix $\intBasis \in \intSpace$ such that 
\begin{itemize}
  \item the rows of $\intBasis$ form a basis of the $\polRing$-module
    \begin{align*}
      \Bigg\{ \row{p} & =
      [p_\gamma]_{\gamma\in\expSet} \in \polMatSpace[1][\rdim]
      \;\bigg|\; \sum_{\gamma \in \expSet}
      p_\gamma(X) Y^\gamma \text{ vanishes} \\
      & \text{at } (\evpt_k,y_k) \text{ with
      support } \supp_k \text{ for } 1 \le k \le \nbpt \Bigg\} ,
    \end{align*}
  \item $\intBasis$ is $\shifts$-reduced.
\end{itemize}
\end{pbm}
\end{minipage}
}
\end{figure}

For more details about the reduction from Problem~\ref{pbm:multi_int} to
Problem~\ref{pbm:mib}, explaining how to build the input matrices
$(\evMat,\mulmat)$ with $\mulmat$ a Jordan matrix in standard representation,
we refer the reader to~\cite[Subsection~2.4]{JeNeScVi15}. In particular, the
dimension $\order$ is the sum of the cardinalities of the multiplicity
supports. In the mentioned applications to coding theory, we have $\rdim =
\binom{\nvars+\ell}{\nvars}$ where $\ell$ is the list-size parameter; and
$\order$ is the so-called \emph{cost} in the soft-decoding
context~\cite[Section~III]{KoeVar03a}, that is, the number of linear equations
when linearizing the problem over $\field$. As a consequence of
Theorem~\ref{thm:pib}, we obtain the following complexity result.

\begin{thm}
  \label{thm:multi_int} 
  Let $\order = \sum_{1\le k\le p} \#\supp_k$. There is a deterministic
  algorithm which solves Problem~\ref{pbm:multi_int} using
  \begin{align*}
  \bigO{ & \rdim^{\expmatmul-1} \polmultime{\order}
  \log(\order) \log(\order/\rdim)^2 } & \text{if }
  \expmatmul>2, \\
  \bigO{ & \rdim \polmultime{\order} \log(\order)
  \log(\order/\rdim)^2 \log(\rdim)^3 
  } & \text{if } \expmatmul=2\;
  \end{align*}
  operations in $\field$, and returns the unique basis of solutions which is in
  $\shifts$-Popov form.
\end{thm}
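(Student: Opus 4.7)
The plan is to reduce Problem~\ref{pbm:multi_int} to Problem~\ref{pbm:mib} and then invoke Theorem~\ref{thm:pib}. This reduction is exactly the one described in~\cite[Subsection~2.4]{JeNeScVi15}, and the main task is to verify that (i) the construction of the input $(\evMat,\mulmat)$ to Problem~\ref{pbm:mib} can be carried out within our target cost, (ii) $\mulmat$ is produced directly in the standard Jordan representation required by Theorem~\ref{thm:pib}, and (iii) the $\shifts$-Popov basis returned by Theorem~\ref{thm:pib} satisfies the output specification of Problem~\ref{pbm:multi_int}.

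First I would describe the construction of $\evMat$ and $\mulmat$. For a single point $(\evpt_k,y_k)$ with multiplicity support $\supp_k$ stable under division, the vanishing of $Q(X,Y)=\sum_{\gamma\in\expSet} p_\gamma(X) Y^\gamma$ at $(\evpt_k,y_k)$ with support $\supp_k$ translates, via the Taylor expansion $Q(X+\evpt_k, Y+y_k)$ and expansion of $(Y+y_k)^\gamma$, into $\#\supp_k$ linear conditions on the coefficients of the $p_\gamma$'s modulo $(X-\evpt_k)^{\sigma}$ for appropriate $\sigma$. Because $\supp_k$ is stable under division, the resulting set of conditions is organized into Jordan blocks with eigenvalue $\evpt_k$, whose sizes are read off from the "row profile" of $\supp_k$ in the $X$-direction. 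Concatenating these conditions for $k=1,\ldots,\nbpt$ yields the matrix $\evMat\in\evSpace{\order}$ with $\order=\sum_k\#\supp_k$, and produces $\mulmat$ as a Jordan matrix whose list of blocks is already grouped by eigenvalues $\evpt_1,\ldots,\evpt_\nbpt$. A final sort of this list (pairwise distinct eigenvalues, blocks in each group sorted by decreasing size) yields the required standard representation and, by~\cite[Proposition~12]{BoJeSc08}, has cost $\bigO{\polmultime{\order}\log(\order)^3}$, well within our budget.

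The cost of actually writing down $\evMat$ is $\bigO{\rdim\order}$ field operations once one has the evaluations of the $(Y+y_k)^\gamma$'s and the reduced monomials $(X-\evpt_k)^j/j!$ on $\expSet$; these can be computed in the same asymptotic budget. This is then dominated by the cost of Theorem~\ref{thm:pib} applied to $(\evMat,\mulmat,\shifts)$, which yields the announced bound of $\bigO{\rdim^{\expmatmul-1}\polmultime{\order}\log(\order)\log(\order/\rdim)^2}$ for $\expmatmul>2$ and the corresponding one for $\expmatmul=2$.

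It remains to check that the output of the reduction meets the specification of Problem~\ref{pbm:multi_int}. By construction, the $\polRing$-module of interpolants for $(\evMat,\mulmat)$ coincides, under the identification $\row{p}\leftrightarrow [p_\gamma]_{\gamma\in\expSet}$, with the module of polynomials $Q(X,Y)$ vanishing at each $(\evpt_k,y_k)$ with support $\supp_k$; hence the $\shifts$-Popov interpolation basis $\intBasis$ returned by Theorem~\ref{thm:pib} is a basis of the module in Problem~\ref{pbm:multi_int}, and since any Popov form is in particular reduced, $\intBasis$ is $\shifts$-reduced. The only genuine subtlety — and thus the place where the plan could fail — is in verifying that the encoding of "vanishing at $(\evpt_k,y_k)$ with support $\supp_k$" into Jordan blocks really produces a standard representation of $\mulmat$ of total size $\order$; this is handled by the division-stability hypothesis on $\supp_k$ and is the content of~\cite[Subsection~2.4]{JeNeScVi15}, so this step can be quoted rather than redone.
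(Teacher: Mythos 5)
Your proposal is correct and follows essentially the same route as the paper: reduce Problem~\ref{pbm:multi_int} to Problem~\ref{pbm:mib} via the construction of $(\evMat,\mulmat)$ from~\cite[Subsection~2.4]{JeNeScVi15}, sort the block list into standard representation, and invoke Theorem~\ref{thm:pib}. The paper itself presents Theorem~\ref{thm:multi_int} exactly as such a corollary, so your added remarks on how the division-stability of $\supp_k$ yields the Jordan structure and on the (dominated) cost of building $\evMat$ are elaborations of the same argument rather than a different one.
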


Under the assumption that the $x_k$ are pairwise distinct, the cost bound
$\bigO{ \rdim^{\expmatmul-1} \polmultime{\order} \log(\order)^2 }$ was achieved
for an arbitrary shift using fast structured linear
algebra~\cite[Theorems~1~and~2]{CJNSV15}, following work
by~\cite{OlsSho99,RotRuc00,ZeGeAu11}. However, the corresponding algorithm is
randomized and returns only one interpolant of small $\shifts$-degree. For a
broader overview of previous work on this problem, we refer the reader to the
introductive sections of~\cite{BeeBra10,CJNSV15} and to
\cite[Section~2]{JeNeScVi15}.

The term $\bigO{\rdim^{\expmatmul-1} \polmultime{\xi} \log(\xi/\rdim)}$
reported in~\eqref{eq:costJeNeScVi15} for the cost of the algorithm
of~\cite{JeNeScVi15} can be neglected if $\xi \in \bigO{\order}$; this is for
instance satisfied in the context of bivariate interpolation for soft- or
list-decoding of Reed-Solomon codes \cite[Sections~2.5 and~2.6]{JeNeScVi15}.
However, we do not have this bound on $\xi$ in the list-decoding of
Parvaresh-Vardy codes and folded Reed-Solomon codes and in Private Information
Retrieval. Thus, in these cases our algorithm achieves the best known cost
bound, improving upon~\cite{PB08,KB10,CohHen12,DeGoHe12a,JeNeScVi15}.

\section{Fast Popov interpolation basis}
\label{sec:algo}
In this section, we present our main result, Algorithm~\ref{algo:pib}. It
relies on three subroutines; two of them are from~\cite{JeNeScVi15}, while the
third is a key new ingredient, detailed in Section~\ref{sec:known_degrees}.
\begin{itemize}
  \item \algoname{LinearizationMIB}~\cite[Algorithm~9]{JeNeScVi15}
    solves the base case $\order \le \rdim$ using linear algebra over $\field$.
    The inputs are $\evMat$, $\mulmat$, $\shifts$, as well as an integer
    for which we can take the first power of two greater than or equal
    to $\order$.
  \item \algoname{ComputeResiduals}~\cite[Algorithm~5]{JeNeScVi15}
    (with an additional pre-processing detailed at the end of
    Section~\ref{sec:known_degrees}) computes the residual $\intBasis^{(1)}
    \mul \evMat$ from the first basis $\intBasis^{(1)}$ obtained recursively.
  \item \algoname{KnownMinDegMIB}, detailed in Section~\ref{sec:known_degrees},
    computes the $\shifts$-Popov interpolation basis when one knows \emph{a
    priori} the $\shifts$-minimal degree of $(\evMat,\mulmat)$ (see below).
\end{itemize}

In what follows, by \emph{$\shifts$-minimal degree of $(\evMat,\mulmat)$} we
mean the tuple of degrees of the diagonal entries of the $\shifts$-Popov
interpolation basis $\intBasis$ for $(\evMat,\mulmat)$. Because $\intBasis$ is
in $\shifts$-Popov form, this is also the column degree of $\intBasis$, and the
sum of these degrees is $\deg(\det(\intBasis))$. As a consequence, using
Theorem~4.1 in~\cite{BarBul92} (or following the lines of \cite{Kailath80} and
\cite{BecLab00}) we obtain the following lemma, which implies in particular
that the size of $\intBasis$ is at most $\rdim (\order+1)$.

\begin{lem}
  \label{lem:mindeg}
  Let $\evMat \in \evSpace{\order}$, $\mulmat \in \matSpace[\order]$, $\shifts
  \in \shiftSpace$, and let $(\minDeg_1,\ldots,\minDeg_\rdim)$ be the
  $\shifts$-minimal degree of $(\evMat,\mulmat)$. Then, we have $\minDeg_1 +
  \cdots + \minDeg_\rdim \le \order$.
\end{lem}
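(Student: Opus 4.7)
The plan is to separate the lemma into two independent facts. The first, already announced in the paragraph preceding the lemma, is that $\minDeg_1 + \cdots + \minDeg_\rdim = \deg(\det(\intBasis))$. This is immediate from the $\shifts$-Popov structure: column $j$ of $\intBasis$ has a diagonal pivot of degree $\minDeg_j$ while every other entry of that column has strictly smaller degree, so the diagonal product $\prod_j \matCoeff{p}_{jj}$ is the unique term of maximal degree in the Leibniz expansion of $\det(\intBasis)$. What remains is to prove $\deg(\det(\intBasis)) \le \order$, and my approach is to realize both sides as $\field$-dimensions of the same quotient module.

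The key tool is the evaluation map $\Phi \colon \polRing^{1 \times \rdim} \to \field^{1 \times \order}$ defined by $\Phi(\int) = \int \mul \evMat$, where $\field^{1 \times \order}$ is viewed as a $\polRing$-module under $p \mul \row{e} = \row{e}\, p(\mulmat)$. The identity $(pq)(\mulmat) = p(\mulmat) q(\mulmat)$ makes $\Phi$ $\polRing$-linear; by definition $\ker(\Phi)$ is precisely the module of interpolants for $(\evMat,\mulmat)$, and since the rows of $\intBasis$ form a basis of this module we have $\ker(\Phi) = \polRing^{1 \times \rdim} \intBasis$. The image of $\Phi$ is a $\field$-subspace of $\field^{1 \times \order}$, hence of $\field$-dimension at most $\order$, and the first isomorphism theorem yields
\[
  \dim_\field \bigl( \polRing^{1 \times \rdim} / \polRing^{1 \times \rdim} \intBasis \bigr)
  \;=\; \dim_\field \mathrm{im}(\Phi) \;\le\; \order.
\]

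To close the argument, I match the left-hand side with $\deg(\det(\intBasis))$ via the classical Smith normal form computation: $\intBasis$ is nonsingular (its rows form a basis of a rank-$\rdim$ free submodule of $\polRing^{1 \times \rdim}$), so there exist unimodular $\mat{U}, \mat{V} \in \polMatSpace[\rdim]$ with $\mat{U} \intBasis \mat{V} = \mathrm{diag}(s_1,\ldots,s_\rdim)$; the quotient is then $\bigoplus_{i=1}^{\rdim} \polRing/(s_i)$, of $\field$-dimension $\sum_i \deg(s_i) = \deg(\det(\intBasis))$. Combining with the display above gives $\deg(\det(\intBasis)) \le \order$, as desired. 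No step is genuinely hard; the only care needed is in verifying that $\Phi$ is well defined and $\polRing$-linear under the prescribed action, which is implicit in the formulation of Problem~\ref{pbm:mib} and is anyway a direct computation.
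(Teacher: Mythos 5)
Your proof is correct and essentially unpacks the argument the paper delegates to~\cite[Theorem~4.1]{BarBul92}: as the paper itself notes, $\sum_j \minDeg_j = \deg(\det(\intBasis))$, and the bound $\deg(\det(\intBasis)) \le \order$ is what the cited theorem supplies. Your realization of that bound by a dimension count — applying the first isomorphism theorem to the evaluation map $\row{p}\mapsto\row{p}\mul\evMat$ and computing $\dim_\field\bigl(\polRing^{1\times\rdim}/\polRing^{1\times\rdim}\intBasis\bigr)$ via the Smith normal form — is precisely the standard module-theoretic proof of that fact, so the two routes coincide.
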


\begin{figure}[!h]
\centering
\fbox{\begin{minipage}{\linewidth}
\begin{algo} \algoname{PopovMIB}
  \label{algo:pib}
~\\
\emph{Input:} 
  \begin{itemize}
  \item a matrix $\evMat\in\evSpace{\order}$,
  \item a Jordan matrix $\jordan \in \matSpace[\order]$ in standard representation,
  \item a shift $\shifts\in\shiftSpace$.
  \end{itemize}
\emph{Output:}
\begin{itemize}
  \item the $\shifts$-Popov interpolation basis $\intBasis$ for
    $(\evMat,\mulmat)$,
  \item the $\shifts$-minimal degree $\minDegs = (\minDeg_1,\ldots,\minDeg_\rdim)$ of
    $(\evMat,\mulmat)$.
\end{itemize}
\begin{enumerate}[{\bf 1.}] 
  \item If $\order\le\rdim$, return $\algoname{LinearizationMIB}(\evMat,\mulmat,\shifts,2^{\lceil\log_2(\order)\rceil})$
\item Else
  \begin{enumerate}[{\bf a.}]
    \item $\evMat^{(1)} \leftarrow$ first $\lceil\order/2\rceil$ columns of $\evMat$
    \item $(\mat{P}^{(1)}, \minDegs^{(1)}) \leftarrow \algoname{PopovMIB}(\evMat^{(1)},\mulmat^{(1)},\shifts)$
    \item $\evMat^{(2)} \leftarrow$ last $\lfloor\order/2\rfloor$ columns of \\
      ${ }\:\:\quad\qquad \mat{P}^{(1)} \mul \evMat =
      \algoname{ComputeResiduals}(\mulmat,\mat{P}^{(1)},\evMat)$
    \item $(\mat{P}^{(2)}, \minDegs^{(2)}) \leftarrow \algoname{PopovMIB}( \evMat^{(2)}, \mulmat^{(2)}, \shifts+\minDegs^{(1)} )$
    \item $\mat{P} \leftarrow \algoname{KnownMinDegMIB}(\evMat,\mulmat,\shifts,\minDegs^{(1)} + \minDegs^{(2)})$
    \item Return $(\mat{P},\minDegs^{(1)} + \minDegs^{(2)})$
  \end{enumerate}
\end{enumerate}
\end{algo}
\end{minipage}}
\end{figure}

Taking for granted the results in the next sections, we now prove our main
theorem.

\begin{proof}[of Theorem~\ref{thm:pib}]
For the case $\order \le \rdim$, the correctness and the cost bound of
Algorithm~\ref{algo:pib} both follow from~\cite[Theorem~1.4]{JeNeScVi15}: it
uses $\bigO{\order^{\expmatmul-1}\rdim + \order^\expmatmul \log(\order)}$
operations (with an extra $\log(\order)$ factor if $\expmatmul=2$).

Now, we consider the case $\order > \rdim$. Using the notation in the
algorithm, assume that $\intBasis^{(1)}$ is the $\shifts$-Popov interpolation
basis for $(\evMat^{(1)},\mulmat^{(1)})$, and $\intBasis^{(2)}$ is the
$\shifts[t]$-Popov interpolation basis for $(\evMat^{(2)},\mulmat^{(2)})$,
where $\shifts[t] = \shifts + \minDegs^{(1)} =
\rdeg[\shifts]{\intBasis^{(1)}}$, and $\minDegs^{(1)}$ and $\minDegs^{(2)}$ are
the $\shifts$- and $\shifts+\minDegs^{(1)}$-minimal degrees of
$(\evMat^{(1)},\mulmat^{(1)})$ and $(\evMat^{(2)},\mulmat^{(2)})$,
respectively.

We claim that $\intBasis^{(2)} \intBasis^{(1)}$ is $\shifts$-reduced:
this will be proved in Lemma~\ref{lem:pivdegs}. Let us then prove that
$\intBasis^{(2)} \intBasis^{(1)}$ is an interpolation basis for
$(\evMat,\mulmat)$. Let $\row{p} \in
\polMatSpace[1][\rdim]$ be an interpolant for $(\evMat,\mulmat)$. Since
$\mulmat$ is upper triangular, $\row{p}$ is in particular an interpolant for
$(\evMat^{(1)},\mulmat^{(1)})$, so there exists $\row{v} \in
\polMatSpace[1][\rdim]$ such that $\row{p} = \row{v} \intBasis^{(1)}$.
Besides, we have $\intBasis^{(1)} \mul \evMat = [0 | \evMat^{(2)}]$, so that $0
= \row{p} \mul \evMat = \row{v} \intBasis^{(1)} \mul \evMat = [0 | \row{v} \mul
\evMat^{(2)}]$, and thus $\row{v} \mul \evMat^{(2)} = 0$.  Then, there exists
$\row{w} \in \polMatSpace[1][\rdim]$ such that $\row{v} = \row{w}
\intBasis^{(2)}$, which gives $\row{p} = \row{w} \intBasis^{(2)}
\intBasis^{(1)}$.

In particular, the $\shifts$-Popov interpolation basis for $(\evMat,\mulmat)$
is the $\shifts$-Popov form of $\intBasis^{(2)} \intBasis^{(1)}$. Thus,
Lemma~\ref{lem:pivdegs} combined with Lemma~\ref{lem:invariant} will show that the
$\shifts$-minimal degree of $(\evMat,\mulmat)$ is $\minDegs^{(1)} +
\minDegs^{(2)}$. As a result, Proposition~\ref{prop:known_mindeg_pib} states that
Step~\textbf{2.e} correctly computes the $\shifts$-Popov interpolation basis
for $(\evMat,\mulmat)$.

Concerning the cost bound, the recursion stops when $\order \le
\rdim$, and thus the algorithm uses $\bigO{\rdim^\expmatmul \log(\rdim)}$
operations (with an extra $\log(\rdim)$ factor if $\expmatmul=2$). The depth of
the recursion is $\bigO{\log(\order/\rdim)}$; we have two recursive calls in
dimensions $\rdim \times \order/2$, and two calls to subroutines with cost
bounds given in Corollary~\ref{cor:residual} and
Proposition~\ref{prop:known_mindeg_pib}, respectively. The conclusion follows
from the super-linearity properties of $\polmultime{\cdot}$.
\end{proof}

\section{Obtaining the minimal degree \\ from recursive calls}
\label{sec:mindeg_invariant}

In this section, we show that the $\shifts$-minimal degree of
$(\evMat,\mulmat)$ can be deduced for free from two bases computed recursively
as in Algorithm~\ref{algo:pib}. To do this, we actually prove a slightly more
general result about the degrees of the $\shifts$-pivot entries of so-called
weak Popov matrix forms~\cite{MulSto03}.

\begin{dfn}[Weak Popov form, pivot degree]
\label{dfn:weak_popov}
Let $\mat{P} \in \polMatSpace[\rdim]$ be nonsingular and let $\shifts \in
\shiftSpace$. Then, $\mat{P}$ is said to be in \emph{$\shifts$-weak Popov form}
if the $\shifts$-pivot indices of its rows are pairwise distinct; $\mat{P}$ is
said to be in \emph{$\shifts$-diagonal weak Popov form} if its $\shifts$-pivot
entries are on its diagonal.

If $\mat{P}$ is in $\shifts$-weak Popov form, the \emph{$\shifts$-pivot degree}
of $\mat{P}$ is the tuple $(\minDeg_1,\ldots,\minDeg_\rdim)$ where for $j \in
\{1,\ldots,\rdim\}$, $\minDeg_j$ is the $\shifts$-pivot degree of the row of
$\mat{P}$ which has $\shifts$-pivot index $j$. 
\end{dfn}

We recall from Section~\ref{sec:intro} that for $\mat{P} \in
\polMatSpace[k][\rdim]$, its $\shifts$-leading matrix
$\leadingMat[\shifts]{\mat{P}} \in \matSpace[k][\rdim]$ is formed by the
coefficients of degree $0$ of $\shiftMat{-\shifts[d]} \mat{P}
\shiftMat{\shifts}$, where $\shifts[d] = \rdeg[\shifts]{\mat{P}}$ and
$\shiftMat{\shifts}$ stands for the diagonal matrix with entries
$X^{\shift{1}}, \ldots, X^{\shift{\rdim}}$. Then, a nonsingular $\intBasis \in
\intSpace$ is in $\shifts$-diagonal weak Popov form with $\shifts$-pivot degree
$\minDegs$ if and only if $\leadingMat[\shifts]{\mat{P}}$ is lower triangular
and invertible and $\rdeg[\shifts]{\intBasis} = \shifts+\minDegs$. 

For example, at all stages of the algorithms
in~\cite{BarBul92,BecLab94,JeNeScVi15} for Problem~\ref{pbm:mib} (as well
as~\cite{GiJeVi03} if avoiding row permutations at the base case of the
recursion), the computed bases are in shifted diagonal weak Popov form. This is
due to the compatibility of this form with matrix multiplication, as stated in
the next lemma.

\begin{lem}
\label{lem:pivdegs}
Let $\shifts \in \shiftSpace$, $\intBasis^{(1)} \in \polMatSpace[\rdim]$ in
$\shifts$-diagonal weak Popov form with $\shifts$-pivot degree
$\minDegs^{(1)}$, $\shifts[t] = \shifts + \minDegs^{(1)} =
\rdeg[\shifts]{\intBasis^{(1)}}$, and $\intBasis^{(2)} \in \polMatSpace[\rdim]$
in $\shifts[t]$-diagonal weak Popov form with $\shifts[t]$-pivot degree
$\minDegs^{(2)}$. Then, $\intBasis^{(2)} \intBasis^{(1)}$ is in $\shifts$-diagonal weak Popov
form with $\shifts$-pivot degree $\minDegs^{(1)} + \minDegs^{(2)}$.
\end{lem}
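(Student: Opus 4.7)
The plan is to work with the leading-matrix characterization of diagonal weak Popov form: a nonsingular $\intBasis$ is in $\shifts$-diagonal weak Popov form with $\shifts$-pivot degree $\minDegs$ if and only if $\leadingMat[\shifts]{\intBasis}$ is lower triangular with nonzero diagonal entries and $\rdeg[\shifts]{\intBasis} = \shifts + \minDegs$. The paragraph preceding the lemma essentially states this, so I would invoke it to convert the hypotheses into the following: $\leadingMat[\shifts]{\intBasis^{(1)}}$ and $\leadingMat[\shifts[t]]{\intBasis^{(2)}}$ are both lower triangular with nonzero diagonal, with $\rdeg[\shifts]{\intBasis^{(1)}} = \shifts[t]$ and $\rdeg[\shifts[t]]{\intBasis^{(2)}} = \shifts[t] + \minDegs^{(2)} = \shifts + \minDegs^{(1)} + \minDegs^{(2)}$.

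The core step is a leading-matrix multiplicativity argument (essentially the predictable-degree property). I would prove that, under our hypotheses,
\[
\rdeg[\shifts]{\intBasis^{(2)}\intBasis^{(1)}} = \shifts + \minDegs^{(1)} + \minDegs^{(2)}
\quad\text{and}\quad
\leadingMat[\shifts]{\intBasis^{(2)}\intBasis^{(1)}} = \leadingMat[\shifts[t]]{\intBasis^{(2)}}\,\leadingMat[\shifts]{\intBasis^{(1)}}.
\]
The row-by-row verification goes as follows: for each row $\row{a}$ of $\intBasis^{(2)}$, the $\shifts$-degree of $\row{a}\,\intBasis^{(1)}$ is bounded by the $\shifts[t]$-degree of $\row{a}$, with equality iff the vector-matrix product of the $\shifts[t]$-leading vector of $\row{a}$ with $\leadingMat[\shifts]{\intBasis^{(1)}}$ is nonzero. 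Since $\intBasis^{(1)}$ is $\shifts$-reduced, $\leadingMat[\shifts]{\intBasis^{(1)}}$ is invertible, so no cancellation of highest-degree terms can occur; this yields both the row degree formula and the product formula for leading matrices.

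Once the multiplicativity is in place, the conclusion follows immediately: the product of two lower triangular matrices with nonzero diagonal is again lower triangular with nonzero diagonal (its diagonal entries are products of the corresponding diagonal entries of the factors). Hence $\leadingMat[\shifts]{\intBasis^{(2)}\intBasis^{(1)}}$ is lower triangular with nonzero diagonal, so $\intBasis^{(2)}\intBasis^{(1)}$ is in $\shifts$-diagonal weak Popov form, and its $\shifts$-pivot degree is read off as $\rdeg[\shifts]{\intBasis^{(2)}\intBasis^{(1)}} - \shifts = \minDegs^{(1)} + \minDegs^{(2)}$. The main obstacle is the clean justification of the leading-matrix product formula; this is classical (it amounts to the predictable-degree property for $\shifts$-reduced matrices), but care is needed to state and invoke it in the shifted setting that mixes the shifts $\shifts$ and $\shifts[t]$.
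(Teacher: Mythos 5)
Your proposal is correct and follows essentially the same route as the paper's proof: the predictable-degree property gives $\rdeg[\shifts]{\intBasis^{(2)}\intBasis^{(1)}} = \shifts + \minDegs^{(1)} + \minDegs^{(2)}$, and the identity $\leadingMat[\shifts]{\intBasis^{(2)}\intBasis^{(1)}} = \leadingMat[{\shifts[t]}]{\intBasis^{(2)}}\,\leadingMat[\shifts]{\intBasis^{(1)}}$ (valid because $\leadingMat[\shifts]{\intBasis^{(1)}}$ is invertible, so no cancellation occurs) shows the product's leading matrix is lower triangular and invertible, which is exactly the characterization stated just before the lemma. The only difference is that you sketch the row-by-row justification of the leading-matrix multiplicativity, whereas the paper cites it directly from Kailath.
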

\begin{proof}
By the predictable-degree property~\cite[Theorem~6.3-13]{Kailath80} we have
$\rdeg[\shifts]{\intBasis^{(2)} \intBasis^{(1)}} =
\rdeg[{\shifts[t]}]{\intBasis^{(2)}} = \shifts[t] + \minDegs^{(2)} = \shifts +
\minDegs^{(1)} + \minDegs^{(2)}$. The result follows since
$\leadingMat[\shifts]{ \intBasis^{(2)} \intBasis^{(1)} } =
\leadingMat[{\shifts[t]}]{\intBasis^{(2)}}
\leadingMat[\shifts]{\intBasis^{(1)}}$ is lower triangular and invertible.
\end{proof}

For matrices in $\shifts$-Popov form, the $\shifts$-pivot degree coincides with
the column degree: in particular, the $\shifts$-minimal degree of
$(\evMat,\mulmat)$ is the $\shifts$-pivot degree of the $\shifts$-Popov
interpolation basis for $(\evMat,\mulmat)$. With the notation of
Algorithm~\ref{algo:pib}, the previous lemma proves that the $\shifts$-pivot
degree of $\intBasis^{(2)} \intBasis^{(1)}$ is $\minDegs^{(1)} +
\minDegs^{(2)}$. In the rest of this section, we prove that the $\shifts$-Popov
form of $\intBasis^{(2)} \intBasis^{(1)}$ has the same $\shifts$-pivot degree
as $\intBasis^{(2)} \intBasis^{(1)}$. Consequently, the $\shifts$-minimal
degree of $(\evMat,\mulmat)$ is $\minDegs^{(1)} + \minDegs^{(2)}$ and thus can
be found from $\intBasis^{(2)}$ and $\intBasis^{(1)}$ without computing their
product.

It is known that left-unimodularly equivalent $\shifts$-reduced matrices have
the same $\shifts$-row degree up to permutation~\cite[Lemma~6.3-14]{Kailath80}.
Here, we prove that the $\shifts$-pivot degree is invariant among
left-unimodularly equivalent matrices in $\shifts$-weak Popov form.

\begin{lem}
\label{lem:invariant}
Let $\shifts \in \shiftSpace$ and let $\mat{P}$ and $\mat{Q}$ in $\intSpace$
be two left-unimodularly equivalent nonsingular polynomial matrices in
$\shifts$-weak Popov form. Then $\mat{P}$ and $\mat{Q}$ have the same
$\shifts$-pivot degree.
\end{lem}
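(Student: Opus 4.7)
Plan: The strategy is to show that the $\shifts$-pivot degree of any nonsingular matrix in $\shifts$-weak Popov form is determined entirely by the $\polRing$-module $M$ generated by its rows. Since $\mat{P}$ and $\mat{Q}$ are left-unimodularly equivalent, they share the same row module $M$, and the conclusion follows.

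Writing $(\minDeg_1,\ldots,\minDeg_\rdim)$ for the $\shifts$-pivot degree of $\mat{P}$, I will establish the intrinsic characterization
\begin{equation*}
\minDeg_j \;=\; \min\bigl\{\, \deg(p_j) \,:\, \row{p}=(p_1,\ldots,p_\rdim)\in M\setminus\{0\},\; \text{$\shifts$-pivot index of $\row{p}$ equals } j \,\bigr\}
\end{equation*}
for each $j\in\{1,\ldots,\rdim\}$. The inequality $\minDeg_j\ge\min(\cdots)$ is trivial: the row of $\mat{P}$ with $\shifts$-pivot index $j$ belongs to $M$ and, by definition of pivot degree, has $j$-th coordinate of degree exactly $\minDeg_j$.

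For the reverse inequality, I take any nonzero $\row{p}\in M$ whose $\shifts$-pivot index is $j$ and write $\row{p}=\sum_k a_k\row{p}_k$, where $\row{p}_k$ denotes the row of $\mat{P}$ with $\shifts$-pivot index $k$ (so its pivot degree is $\minDeg_k$). Since $\mat{P}$ is $\shifts$-reduced, the predictable-degree property \cite[Theorem~6.3-13]{Kailath80} gives $\rdeg[\shifts]{\row{p}}=\max_\ell(\deg(a_\ell)+\shift{\ell}+\minDeg_\ell)$. Let $\ell^*$ be the largest index attaining this maximum. Inspecting the $\ell^*$-th coordinate of $\row{p}$, the term $a_{\ell^*}p_{\ell^*,\ell^*}$ has degree $\deg(a_{\ell^*})+\minDeg_{\ell^*}=\rdeg[\shifts]{\row{p}}-\shift{\ell^*}$, while every other term $a_\ell p_{\ell,\ell^*}$ has strictly smaller degree: for $\ell>\ell^*$ by the maximality of $\ell^*$, and for $\ell<\ell^*$ because $\ell^*$ is strictly greater than the pivot index $\ell$ of $\row{p}_\ell$, which forces $\deg(p_{\ell,\ell^*})+\shift{\ell^*}<\minDeg_\ell+\shift{\ell}$. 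Hence $\deg(p_{\ell^*})+\shift{\ell^*}=\rdeg[\shifts]{\row{p}}$, so the $\shifts$-pivot index of $\row{p}$ is at least $\ell^*$, giving $j\ge\ell^*$. A symmetric inspection of the $j$-th coordinate, assuming $j>\ell^*$, shows that each $a_\ell p_{\ell,j}$ has degree strictly less than $\deg(p_j)$, contradicting $p_j=\sum_\ell a_\ell p_{\ell,j}$. Thus $j=\ell^*$, whence $\deg(p_j)=\deg(a_j)+\minDeg_j\ge\minDeg_j$.

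The main obstacle is the careful tracking of strict versus non-strict inequalities, stemming from the pivot index being defined as the \emph{largest} index achieving the row degree: this distinction is precisely what isolates a single dominant term in the two coordinate inspections above. Once the characterization is established, the lemma is immediate, since its right-hand side depends only on the module $M$, which is the same for $\mat{P}$ and $\mat{Q}$.
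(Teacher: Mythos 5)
Your proof is correct, but it takes a genuinely different route from the paper's. The paper (following a referee's suggestion) first permutes rows to reduce to $\shifts$-diagonal weak Popov form, then invokes the identity $\sumVec{\rdeg[\shifts]{\mat{W}}} = \deg(\det(\mat{W})) + \sshifts$ for matrices in $\shifts$-weak Popov form, and concludes by an exchange argument: if some row of $\mat{Q}$ had strictly smaller $\shifts$-degree than the corresponding row of $\mat{P}$, substituting it into $\mat{P}$ would produce a left multiple of $\mat{P}$ in $\shifts$-diagonal weak Popov form whose sum of $\shifts$-row degrees is too small, a contradiction. You instead establish the intrinsic characterization $\minDeg_j = \min\{\deg(p_j) : \row{p} \in M\setminus\{0\} \text{ with } \shifts\text{-pivot index } j\}$, proved via the predictable-degree property and a careful leading-term analysis showing that the pivot index of $\row{p} = \sum_k a_k \row{p}_k$ must equal the largest index attaining the maximum in the predictable-degree formula; the lemma then follows since this characterization depends only on the common row module $M$. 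Your strict/non-strict bookkeeping is sound (in particular the two coordinate inspections correctly exploit that the pivot index is the \emph{largest} attaining index), and the case analysis is complete. What each approach buys: the paper's argument is shorter and global, resting on a single determinant-degree identity; yours is more local and more informative, as it exhibits each pivot degree individually as a module invariant (essentially the minimality property underlying shifted Popov forms), at the price of a more delicate term-by-term degree analysis. One presentational nit: when handling $\ell > \ell^*$ in the first inspection, you should state explicitly that $\deg(p_{\ell,\ell^*}) + \shift{\ell^*} \le \rdeg[\shifts]{\row{p}_\ell} = \minDeg_\ell + \shift{\ell}$ before combining it with the strict inequality from the maximality of $\ell^*$; the step is trivially true but currently left implicit.
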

\begin{proof}
Since row permutations preserve both the $\shifts$-pivot degrees and
left-unimodular equivalence, we can assume that $\mat{P}$ and $\mat{Q}$ are in
$\shifts$-diagonal weak Popov form. The $\shifts$-pivot degrees of $\mat{P}$
and $\mat{Q}$ are then $\rdeg[\shifts]{\mat{P}} - \shifts$ and
$\rdeg[\shifts]{\mat{Q}} - \shifts$, and it remains to check that
$\rdeg[\shifts]{\mat{P}} = \rdeg[\shifts]{\mat{Q}}$.

For any nonsingular $\mat{W} \in \polMatSpace$ in $\shifts$-weak Popov form we
have $\sumVec{\rdeg[\shifts]{\mat{W}}} = \deg(\det(\mat{W})) +
\sshifts$~\cite[Section~6.3.2]{Kailath80}. Thus, if $\mat{W}$ is furthermore
comprised entirely of rows in the $\polRing$-row space of $\mat{P}$ (that is,
$\mat{W}$ is a left multiple of $\mat{P}$) then we must have
$\sumVec{\rdeg[\shifts]{\mat{W}}} \ge \sumVec{\rdeg[\shifts]{\mat{P}}}$.

To arrive at a contradiction, suppose there exists a row index $i$ such that
the $\shifts$-degree of $\matrow{\mat{P}}{i}$ differs from that of
$\matrow{\mat{Q}}{i}$ and without loss of generality assume that the s-degree
of $\matrow{\mat{Q}}{i}$ is strictly less than that of $\matrow{\mat{P}}{i}$.
Then the matrix $\mat{W}$ obtained from $\mat{P}$ by replacing the $i$-th row
of $\mat{P}$ with $\matrow{\mat{Q}}{i}$ is in $\shifts$-diagonal weak Popov
form. This is a contradiction, since $\sumVec{\rdeg[\shifts]{\mat{W}}} <
\sumVec{\rdeg[\shifts]{\mat{P}}}$ and $\matrow{\mat{Q}}{i}$ is in the
$\polRing$-row space of $\mat{P}$ for $\mat{Q}$ is left-unimodularly
equivalent to $\mat{P}$.
\end{proof}

In particular, any nonsingular matrix in $\shifts$-weak Popov form has the same
$\shifts$-pivot degree as its $\shifts$-Popov form, which proves our point
about the $\shifts$-Popov form of $\intBasis^{(2)} \intBasis^{(1)}$.

\section{Computing interpolation bases with known minimal degree}
\label{sec:known_degrees}

In this section, we propose an efficient algorithm for computing the
$\shifts$-Popov interpolation basis $\intBasis$ for $(\evMat,\mulmat)$ when the
$\shifts$-minimal degree $\minDegs$ of $(\evMat,\mulmat)$ is known \emph{a
priori}.

First, we show that the shift $\shifts[d] = -\minDegs$ leads to the same
$\shifts[d]$-Popov interpolation basis $\intBasis$ as the initial shift
$\shifts$. Then, we prove that $\intBasis$ can be easily recovered from any interpolation
basis which is simply $\shifts[d]$-reduced. The following lemma
extends~\cite[Lemmas~15 and~17]{SarSto11} to the case of any shift~$\shifts$.

\begin{lem}
  \label{lem:minDeg_shift}
  Let $\shifts\in\shiftSpace$, and let $\mat{P}\in\polMatSpace[\rdim]$ be in
  $\shifts$-Popov form with column degree $\minDegs =
  (\minDeg_1,\ldots,\minDeg_\rdim)$. Then $\mat{P}$ is also in
  $\shifts[d]$-Popov form for $\shifts[d] =
  (-\minDeg_1,\ldots,-\minDeg_\rdim)$, and we have
  $\rdeg[{\shifts[d]}]{\mat{P}} = (0,\ldots,0)$. In particular, for any matrix
  $\mat{R} \in \polMatSpace[\rdim]$ which is unimodularly equivalent to
  $\mat{P}$ and $\shifts[d]$-reduced, $\mat{R}$ has column degree $\minDegs$,
  and $\mat{P} = \leadingMat[{\shifts[d]}]{\mat{R}}^{-1} \mat{R}$.
\end{lem}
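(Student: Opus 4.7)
The plan is to establish the statement in two stages: first analyse the shape of $\mat{P}$ under the new shift $\shifts[d]$, then transfer this information to $\mat{R}$ through left-unimodular equivalence.

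For the first stage, I would unwind the $\shifts$-Popov assumption. For any row $i$ of $\mat{P}$, the diagonal entry $P_{ii}$ is monic of degree $\minDeg_i$, while each off-diagonal entry $P_{ij}$ with $j\neq i$ satisfies $\deg(P_{ij}) < \minDeg_j$. Shifting by $\shifts[d]=-\minDegs$, the quantity $\deg(P_{ij}) - \minDeg_j$ equals $0$ at $j=i$ and is strictly negative otherwise, which gives $\rdeg[{\shifts[d]}]{\matrow{\mat{P}}{i}} = 0$ with the maximum attained uniquely at $j=i$. This simultaneously shows that $\rdeg[{\shifts[d]}]{\mat{P}} = (0,\ldots,0)$, that the $\shifts[d]$-pivot entries of $\mat{P}$ are again its monic diagonal entries with off-diagonal entries of strictly smaller degree in every column (so $\mat{P}$ is in $\shifts[d]$-Popov form), and that $\leadingMat[{\shifts[d]}]{\mat{P}} = \idMat$.

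For the second stage, let $\mat{R}$ be $\shifts[d]$-reduced and left-unimodularly equivalent to $\mat{P}$, and write $\mat{P} = \mat{U}\mat{R}$ for some unimodular $\mat{U}$. Since both $\mat{P}$ and $\mat{R}$ are $\shifts[d]$-reduced, their $\shifts[d]$-row degrees agree up to permutation by~\cite[Lemma~6.3-14]{Kailath80}, and as the all-zero tuple is permutation-invariant I get $\rdeg[{\shifts[d]}]{\mat{R}} = (0,\ldots,0)$. Applying the predictable-degree property~\cite[Theorem~6.3-13]{Kailath80} to the product $\mat{U}\mat{R}$ with $\mat{R}$ reduced then yields $\rdeg[\unishift]{\mat{U}} = \rdeg[{\shifts[d]}]{\mat{P}} = (0,\ldots,0)$, so $\mat{U}$ has its entries in $\field$. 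The identity $\rdeg[{\shifts[d]}]{\mat{R}} = (0,\ldots,0)$ also gives $\deg(R_{ij}) \le \minDeg_j$ for all $i,j$, so the $(i,j)$ entry of $\leadingMat[{\shifts[d]}]{\mat{R}}$ is the coefficient of $X^{\minDeg_j}$ in $R_{ij}$; since $\mat{U}$ is scalar, an entrywise computation shows $\leadingMat[{\shifts[d]}]{\mat{P}} = \mat{U}\,\leadingMat[{\shifts[d]}]{\mat{R}}$, and combined with $\leadingMat[{\shifts[d]}]{\mat{P}} = \idMat$ this yields $\mat{U} = \leadingMat[{\shifts[d]}]{\mat{R}}^{-1}$ and hence $\mat{P} = \leadingMat[{\shifts[d]}]{\mat{R}}^{-1}\mat{R}$.

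Finally, to confirm that $\mat{R}$ has column degree exactly $\minDegs$, I would combine the upper bound $\deg(R_{ij}) \le \minDeg_j$ with the observation that $\leadingMat[{\shifts[d]}]{\mat{R}}$ is nonsingular, hence every column $j$ contains some index $i$ with $\deg(R_{ij}) = \minDeg_j$. The main subtlety is orienting the predictable-degree property correctly so that $\mat{U}$ collapses to a scalar matrix; once that is in hand, the rest reduces to direct entrywise computations.
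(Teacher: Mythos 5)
Your proof is correct. The first stage coincides with the paper's argument: both unwind the $\shifts$-Popov degree constraints to see that every row of $\mat{P}$ has $\shifts[d]$-degree $0$ attained on the diagonal, so $\mat{P}$ is in $\shifts[d]$-Popov form with $\rdeg[{\shifts[d]}]{\mat{P}}=(0,\ldots,0)$ (and, as you note, $\leadingMat[{\shifts[d]}]{\mat{P}}=\idMat$). For the second stage you and the paper both start from the same fact that equivalent $\shifts[d]$-reduced matrices share the row degree $(0,\ldots,0)$, but you then diverge: the paper writes $\mat{R}=\leadingMat[{\shifts[d]}]{\mat{R}}\,\shiftMat{\minDegs}+\mat{Q}$ with $\deg$ of column $j$ of $\mat{Q}$ less than $\minDeg_j$, checks directly that $\leadingMat[{\shifts[d]}]{\mat{R}}^{-1}\mat{R}$ is in $\shifts[d]$-Popov form, and concludes by the uniqueness of the $\shifts[d]$-Popov form in a left-equivalence class; you instead apply the predictable-degree property to $\mat{P}=\mat{U}\mat{R}$ to force $\mat{U}$ to be a constant (hence invertible scalar) matrix, and then identify $\mat{U}=\leadingMat[{\shifts[d]}]{\mat{R}}^{-1}$ from $\leadingMat[{\shifts[d]}]{\mat{P}}=\idMat=\mat{U}\,\leadingMat[{\shifts[d]}]{\mat{R}}$. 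Your route has the small advantage of not invoking the canonicity of shifted Popov forms, at the cost of invoking the predictable-degree property (which the paper uses elsewhere anyway); the paper's route makes the Popov structure of the normalized matrix explicit. Your closing argument for the column degree of $\mat{R}$ (upper bound from $\rdeg[{\shifts[d]}]{\mat{R}}=(0,\ldots,0)$, attainment from the nonsingularity of $\leadingMat[{\shifts[d]}]{\mat{R}}$) is also sound and matches the paper's reasoning in substance.
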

\begin{proof}
  Let us denote $\mat{P} = [p_{ij}]_{i,j}$, and let $i \in \{1,\ldots,\rdim\}$.
  Since $\mat{P}$ is in $\shifts$-Popov form, it is enough to prove that the
  $\shifts[d]$-pivot entries of the rows of $\mat{P}$ are on its diagonal. We
  have $\deg(p_{ij}) < \deg(p_{jj}) = \minDeg_j$ for all $j\neq i$, and
  $\deg(p_{ii}) = \minDeg_i$. Then, the $i$-th row of $\mat{P}$ has
  $\shifts[d]$-pivot index $i$ and $\shifts[d]$-degree $0$. Thus $\mat{P}$ is
  in $\shifts[d]$-Popov form with $\shifts[d]$-row degree $(0,\ldots,0)$.

  Now, let $\mat{R}$ be a $\shifts[d]$-reduced matrix left-unimodularly
  equivalent to $\mat{P}$. Then, $\rdeg[{\shifts[d]}]{\mat{R}} =
  \rdeg[{\shifts[d]}]{\mat{P}} = (0,\ldots,0)$, so that we can write $\mat{R} =
  \leadingMat[{\shifts[d]}]{\mat{R}} \shiftMat{\minDegs} + \mat{Q}$ with the
  $j$-th column of $\mat{Q}$ of degree less than $\minDeg_j$. In particular,
  since $\leadingMat[{\shifts[d]}]{\mat{R}}$ is invertible, the column degree
  of $\mat{R}$ is $\minDegs$. Besides, we obtain
  $\leadingMat[{\shifts[d]}]{\mat{R}}^{-1} \mat{R} = \shiftMat{\minDegs} +
  \leadingMat[{\shifts[d]}]{\mat{R}}^{-1} \mat{Q}$, and the $j$-th column of
  $\leadingMat[{\shifts[d]}]{\mat{R}}^{-1} \mat{Q}$ has degree less than
  $\minDeg_j$. Thus $\leadingMat[{\shifts[d]}]{\mat{R}}^{-1} \mat{R}$ is in
  $\shifts[d]$-Popov form and unimodularly equivalent to $\mat{P}$, hence equal
  to $\mat{P}$.
\end{proof}

In particular, if $\minDegs$ is the $\shifts$-minimal degree of
$(\evMat,\mulmat)$ and $\shifts[d] = -\minDegs$, any $\shifts[d]$-minimal
interpolation basis $\mat{R}$ for $(\evMat,\mulmat)$ has size at most $\rdim^2
+ \rdim \sumVec{\minDegs}$, which for $\order \ge \rdim$ is in $\bigO{\rdim
\order}$. Still, the algorithm in~\cite{JeNeScVi15} cannot directly be used to
compute such an $\mat{R}$ efficiently, because $\sshifts[{\shifts[d] -
\min(\shifts[d])}]$ can be as large as $\Theta(\rdim \order)$, for example when
$\minDegs = (\order,0,\ldots,0)$; in this case, this algorithm uses
$\softO{\rdim^\expmatmul \order}$ operations.

By Lemma~\ref{lem:mindeg}, however, $\shifts[d] = -\minDegs$ satisfies
$\sshifts[{\!\max(\shifts[d]) - \shifts[d]}] \le \order$. For this type of
unbalanced shift, a solution in $\softO{\rdim^{\expmatmul-1} \order}$ already
exists in the particular case of order basis
computation~\cite[Section~6]{ZhoLab12}, building upon the partial linearization
technique in~\cite[Section~3]{Storjohann06}. Here, we adopt a similar approach,
taking advantage of the \emph{a priori} knowledge of the column degree of the
output matrix.

\begin{lem}
  \label{lem:known_mindeg_pib}
  Let $\evMat \in \matSpace[\rdim][\order]$, $\mulmat \in \matSpace[\order]$,
  and $\shifts \in \shiftSpace$, and let $\minDegs =
  (\minDeg_1,\ldots,\minDeg_\rdim)$ denote the $\shifts$-minimal degree of
  $(\evMat,\mulmat)$. 

  Then, let $\degExp = \lceil \order / \rdim \rceil \ge 1$, and for
  $i\in\{1,\ldots,\rdim\}$ write $\minDeg_i = (\quoExp_i -1) \degExp +
  \remExp_i$ with $\quoExp_i \ge 1$ and $0 \le \remExp_i < \degExp$, and let
  $\expand{\rdim} = \quoExp_1 + \cdots + \quoExp_\rdim$. Then, define
  $\expand{\minDegs}\in \NN^{\expand{\rdim}}$ as
  \begin{equation}
    \label{eqn:expandMinDegs}
    \expand{\minDegs} = ( \underbrace{\degExp, \ldots, \degExp, \remExp_1}_{\quoExp_1}, \ldots,
    \underbrace{\degExp, \ldots, \degExp, \remExp_\rdim}_{\quoExp_\rdim} )
  \end{equation}
  and the expansion-compression matrix $\expandMat \in
  \polMatSpace[\expand{\rdim}][\rdim]$ as
  \begin{equation}
    \label{eqn:expandMat}
    \expandMat = 
    \begin{bmatrix}
      1 \\
      X^\degExp \\
      \vdots \\
      X^{(\quoExp_1-1)\degExp} \\
      & \ddots \\
      & & 1 \\
      & & X^\degExp \\
      & & \vdots \\
      & & X^{(\quoExp_\rdim-1)\degExp}
    \end{bmatrix}. 
  \end{equation}
  Let further $\shifts[d] = - \expand{\minDegs} \in
  \shiftSpace[\expand{\rdim}]$ and $\mat{R} \in \polMatSpace[\expand{\rdim}]$
  be a $\shifts[d]$-minimal interpolation basis for $(\expandMat \mul \evMat,
  \mulmat)$. Then, the $\shifts$-Popov interpolation basis for
  $(\evMat,\mulmat)$ is the submatrix of
  $\leadingMat[{\shifts[d]}]{\mat{R}}^{-1} \mat{R} \expandMat$ formed by its
  rows at indices $\quoExp_1+\cdots+\quoExp_i$ for $1\le i\le \rdim$.
\end{lem}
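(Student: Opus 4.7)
The plan is to exhibit a matrix $\expand{\intBasis} \in \polMatSpace[\expand{\rdim}]$ which is an interpolation basis for $(\expandMat \mul \evMat, \mulmat)$ and is in $\shifts[d]$-Popov form with column degree $\expand{\minDegs}$. Since $\shifts[d] = -\expand{\minDegs}$, Lemma~\ref{lem:minDeg_shift} applied to $\expand{\intBasis}$ will then yield $\expand{\intBasis} = \leadingMat[{\shifts[d]}]{\mat{R}}^{-1} \mat{R}$ for any $\shifts[d]$-reduced basis $\mat{R}$, and the statement will follow by computing $\expand{\intBasis} \expandMat$ and restricting to the prescribed rows.

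To construct $\expand{\intBasis}$, I would write $\intBasis = [P_{i,j}]$ and split each entry along powers of $X^\degExp$: $P_{i,j} = \sum_{k=0}^{\quoExp_j - 1} P_{i,j,k} X^{k\degExp}$ with $\deg P_{i,j,k} < \degExp$ for $k < \quoExp_j - 1$. Because $\intBasis$ is in $\shifts$-Popov form with column degree $\minDegs$, this forces $\deg P_{i,j,\quoExp_j - 1} \le \remExp_j$, with equality and leading coefficient $1$ exactly when $i = j$. Let $\expand{\int}_i \in \polMatSpace[1][\expand{\rdim}]$ be the \emph{lift} whose entry at position $(j,k)$ is $P_{i,j,k}$, and let $u_{i,k}$ denote the \emph{trivial relation} with $X^\degExp$ at position $(i,k)$, $-1$ at position $(i,k+1)$, and zero elsewhere, for $0 \le k \le \quoExp_i - 2$. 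A direct check gives $\expand{\int}_i \expandMat$ equal to the $i$-th row of $\intBasis$ and $u_{i,k} \expandMat = \bzero$; combined with the identity $\expand{\row{p}} \mul (\expandMat \mul \evMat) = (\expand{\row{p}} \expandMat) \mul \evMat$, these row vectors are all interpolants for $(\expandMat \mul \evMat, \mulmat)$. I then define $\expand{\intBasis}$ by assigning row $(i, \quoExp_i - 1)$ to be $\expand{\int}_i$ and row $(i, k)$ for $k \le \quoExp_i - 2$ to be $u_{i, k}$, with the sole exception that when $\remExp_i = 0$ the row $(i, \quoExp_i - 2)$ is replaced by $u_{i, \quoExp_i - 2} + \expand{\int}_i$.

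Two things must then be checked: (i) the row span of $\expand{\intBasis}$ is the entire $\polRing$-module of interpolants for $(\expandMat \mul \evMat, \mulmat)$, and (ii) $\expand{\intBasis}$ is in $\shifts[d]$-Popov form with column degree $\expand{\minDegs}$. For (i), the key observation is that the $u_{i, k}$ generate the kernel of the surjective $\polRing$-module homomorphism $\expand{\row{p}} \mapsto \expand{\row{p}} \expandMat$ from the $(\expandMat \mul \evMat, \mulmat)$-interpolant module onto the $(\evMat, \mulmat)$-interpolant module, while the lifts $\expand{\int}_i$ provide a splitting of this surjection, so the two families together span the whole module. For (ii), the $\shifts[d]$-pivot positions and column-degree conditions can be read off row-by-row from the degree bounds on the $P_{i,j,k}$ together with the definition of $\shifts[d] = -\expand{\minDegs}$.

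The main obstacle will be point (ii) in the corner case $\remExp_i = 0$. Without the adjustment, the plain trivial relation $u_{i, \quoExp_i - 2}$ attains its $\shifts[d]$-degree of $0$ at both positions $(i, \quoExp_i - 2)$ and $(i, \quoExp_i - 1)$, which would put its $\shifts[d]$-pivot off the diagonal and collide with the diagonal pivot of $\expand{\int}_i$. The replacement $u_{i, \quoExp_i - 2} + \expand{\int}_i$ cancels the entry at position $(i, \quoExp_i - 1)$ while keeping the entry at $(i, \quoExp_i - 2)$ monic of degree $\degExp$, thereby restoring the diagonal pivot without disturbing the interpolant property, the $\shifts[d]$-row degree, or any column-degree bound. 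Once (i) and (ii) are secured, Lemma~\ref{lem:minDeg_shift} gives $\expand{\intBasis} = \leadingMat[{\shifts[d]}]{\mat{R}}^{-1} \mat{R}$, and since by construction the row of $\expand{\intBasis}$ at index $\quoExp_1 + \cdots + \quoExp_i$ is $\expand{\int}_i$, multiplying by $\expandMat$ and selecting these $\rdim$ rows yields precisely the rows of $\intBasis$, as claimed.
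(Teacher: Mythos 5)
Your construction is the same as the paper's: partially linearize the columns of the $\shifts$-Popov basis $\intBasis$ in degree $\degExp$, interleave the resulting ``lift'' rows with the trivial relations $X^\degExp e_{(i,k)} - e_{(i,k+1)}$ to form $\expand{\intBasis}$, show that $\expand{\intBasis}$ is an interpolation basis for $(\expandMat\mul\evMat,\mulmat)$ in $\shifts[d]$-Popov form with $\shifts[d]$-row degree $(0,\ldots,0)$, and conclude via Lemma~\ref{lem:minDeg_shift}. You differ in two places, both to your credit. First, to prove that the rows of $\expand{\intBasis}$ generate the whole interpolant module, you argue module-theoretically (the trivial relations generate the kernel of $\expand{\row{p}}\mapsto\expand{\row{p}}\expandMat$, and the lifts are preimages of a generating set of the image), whereas the paper writes $\expand{\intBasis}=\mat{U}\mat{R}$ and deduces that $\mat{U}$ is constant from the fact that no interpolant for $(\expandMat\mul\evMat,\mulmat)$ has negative $\shifts[d]$-degree together with the predictable-degree property; both arguments are valid and of comparable length. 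Second, and more interestingly, you correctly identify that when $\remExp_i=0$ and $\quoExp_i\ge 2$ (e.g.\ $\minDeg_i=\degExp$), the unmodified trivial relation $u_{i,\quoExp_i-2}$ attains its $\shifts[d]$-degree $0$ at both columns $(i,\quoExp_i-2)$ and $(i,\quoExp_i-1)$, so under Definition~\ref{dfn:pivot} (pivot $=$ \emph{largest} such index) its pivot entry is the off-diagonal, non-monic $-1$; the paper's ``one can check that $\expand{\intBasis}$ is in $\shifts[d]$-Popov form'' is, strictly speaking, false in this case. Your repair --- replacing that row by $u_{i,\quoExp_i-2}+\expand{\int}_i$, which cancels the offending entry while keeping the diagonal entry monic of degree $\degExp$ and preserving the interpolant property, the row span, and the column-degree bounds --- restores the Popov form without altering the rows selected at indices $\quoExp_1+\cdots+\quoExp_i$, so the lemma's conclusion is unaffected. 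Your proof is therefore correct and in this corner case slightly more careful than the paper's.
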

\begin{proof}
  Let $\intBasis$ denote the $\shifts$-Popov interpolation basis for
  $(\evMat,\mulmat)$; $\intBasis$ has column degree $\minDegs$. First, we
  partially linearize the columns of $\intBasis$ in degree $\degExp$ to obtain
  $\widetilde{\intBasis} \in \polMatSpace[\rdim][\expand{\rdim}]$; more
  precisely, $\widetilde{\intBasis}$ is the unique matrix of degree less than
  $\degExp$ such that $\intBasis = \widetilde{\intBasis} \expandMat$.
  Then, we define $\expand{\intBasis} \in \polMatSpace[\expand{\rdim}]$ as
  follows:
  \begin{itemize}
    \item for $1\le i \le \rdim$, the row $\quoExp_1 + \cdots + \quoExp_i$ of $\expand{\mat{P}}$ is the
      row $i$ of $\widetilde{\intBasis}$;
    \item for $0\le i \le \rdim-1$ and $1 \le j \le \quoExp_{i+1}-1$, the row
      $\quoExp_1 + \cdots + \quoExp_i + j$ of $\expand{\intBasis}$ is the row
      $[ 0, \cdots, 0, X^\degExp, -1, 0, \cdots, 0] \in
      \polMatSpace[1][\expand{\rdim}]$ with the entry $X^\degExp$ at column
      index $\quoExp_1 + \cdots + \quoExp_i + j$.
  \end{itemize}
  Since $\intBasis$ is in $\shifts$-Popov form with column degree $\minDegs$,
  it is in $-\minDegs$-Popov form by Lemma~\ref{lem:minDeg_shift}. Then, one
  can check that $\expand{\intBasis}$ is in $\shifts[d]$-Popov form and has
  $\shifts[d]$-row degree $(0,\ldots,0)$.

  By construction, every row of $\expand{\intBasis}$ is an interpolant for
  $(\expandMat \mul \evMat,\mulmat)$. In particular, since $\mat{R}$ is an
  interpolation basis for $(\expandMat \mul \evMat,\mulmat)$, there is a matrix
  $\mat{U} \in \polMatSpace[\expand{\rdim}]$ such that $\expand{\intBasis} =
  \mat{U} \mat{R}$. Besides, there exists no interpolant $\expand{\row{p}} \in
  \polMatSpace[1][\expand{\rdim}]$ for $(\expandMat \mul \evMat,\mulmat)$ which
  has $\shifts[d]$-degree less than $0$: otherwise, $\expand{\row{p}}
  \expandMat$ would be an interpolant for $(\evMat,\mulmat)$, and it is easily
  checked that it would have $-\minDegs$-degree less than $0$, which is
  impossible.
  
  Thus every row of $\mat{R}$ has $\shifts[d]$-degree at least $0$, and the
  predictable degree property~\cite[Theorem~6.3.13]{Kailath80} shows that
  $\mat{U}$ is a constant matrix, and therefore unimodular. Then,
  $\expand{\intBasis}$ is an interpolation basis for $(\expandMat \mul
  \evMat,\mulmat)$, and since it is in $\shifts[d]$-Popov form, by
  Lemma~\ref{lem:minDeg_shift} we obtain that $\expand{\intBasis} =
  \leadingMat[{\shifts[d]}]{\mat{R}}^{-1} \mat{R}$. The conclusion follows.
\end{proof}

Then, it remains to prove that such a basis $\mat{R}$ can be computed
efficiently using the algorithm \algoname{MinimalInterpolationBasis}
in~\cite{JeNeScVi15}; this leads to Algorithm~\ref{algo:known_mindeg_pib}.

\begin{figure}[h!]
\centering
\fbox{\begin{minipage}{\linewidth}
  \begin{algo} \algoname{KnownMinDegMIB}
  \label{algo:known_mindeg_pib}
~\\
\emph{Input:} 
  \begin{itemize}
  \item a matrix $\evMat\in\evSpace{\order}$ with $\order \ge \rdim > 0$,
  \item a Jordan matrix $\jordan \in \matSpace[\order]$ in standard representation,
  \item a shift $\shifts\in\shiftSpace$,
  \item $\minDegs = (\minDeg_1,\ldots,\minDeg_\rdim) \in \NN^\rdim$ the
    $\shifts$-minimal degree of $(\evMat,\mulmat)$.
  \end{itemize}
\emph{Output:} the $\shifts$-Popov interpolation basis $\intBasis$ for
$(\evMat,\mulmat)$.
\begin{enumerate}[{\bf 1.}] 
  \item $\degExp \leftarrow \lceil \order / \rdim \rceil$, $\quoExp_i
    \leftarrow \lfloor \minDeg_i / \degExp \rfloor + 1$ for $1 \le i \le
    \rdim$, \\ $\expand{\rdim} \leftarrow \quoExp_1 + \cdots + \quoExp_\rdim$
  \item Let $\expand{\minDegs}\in \NN^{\expand{\rdim}}$ as in~\eqref{eqn:expandMinDegs}
    and $\shifts[d] \leftarrow - \expand{\minDegs} \in \NN^{\expand{\rdim}}$
  \item Let $\expandMat \in \polMatSpace[\expand{\rdim}][\rdim]$ as in~\eqref{eqn:expandMat}
    and $\expand{\evMat} \leftarrow \expandMat \mul \evMat$
  \item $\mat{R} \leftarrow \algoname{MinimalInterpolationBasis}(\expand{\evMat},\mulmat,\shifts[d] + (\degExp,\ldots,\degExp))$
  \item $\expand{\intBasis} \leftarrow \leadingMat[{\shifts[d]}]{\mat{R}}^{-1} \mat{R}$
  \item Return the submatrix of $\expand{\intBasis} \expandMat$
    formed by the rows at indices $\quoExp_1+\cdots+\quoExp_i$ for $1\le i\le \rdim$ 
\end{enumerate}
\end{algo}
\end{minipage}}
\end{figure}

\begin{prop}
  \label{prop:known_mindeg_pib}
  Assuming that $\mulmat\in\matSpace[\order]$ is a Jordan matrix given by a
  standard representation, and assuming we have the $\shifts$-minimal degree of
  $(\evMat,\mulmat)$ as an additional input, there is a deterministic algorithm
  \algoname{KnownMinDegMIB} which solves Problem~\ref{pbm:mib} using
  \begin{align*}
    \bigO{ & \rdim^{\expmatmul-1} \polmultime{\order} \log(\order) \log(\order/\rdim) }
    & \text{if } \expmatmul>2, \\
    \bigO{ & \rdim \polmultime{\order} \log(\order) \log(\order/\rdim) \log(\rdim)^3 }
    & \text{if } \expmatmul=2\;
  \end{align*}
  operations in $\field$.
\end{prop}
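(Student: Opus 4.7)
The plan is to combine the structural reduction of Lemma~\ref{lem:known_mindeg_pib} with a cost analysis that exploits how the partial linearization balances the shift. Correctness is essentially already packaged in Lemma~\ref{lem:known_mindeg_pib}, so I would first just check that Algorithm~\ref{algo:known_mindeg_pib} really implements that construction: the only subtlety is that Step~4 passes the shift $\shifts[d]+(\degExp,\ldots,\degExp)$ to \algoname{MinimalInterpolationBasis}, but since this differs from $\shifts[d]$ by a constant vector, reducedness of the output $\mat{R}$ is unchanged and Lemma~\ref{lem:known_mindeg_pib} applies directly to give the correct $\shifts$-Popov interpolation basis after Steps~5 and~6.

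For the complexity, my first step would be to bound the expanded row dimension. Using Lemma~\ref{lem:mindeg}, $\sumVec{\minDegs}\le \order$, whence $\expand{\rdim}=\sum_i (\lfloor \minDeg_i/\degExp\rfloor + 1)\le \rdim + \sumVec{\minDegs}/\degExp \le 2\rdim$. Next, every entry of $\expand{\minDegs}$ lies in $\{0,\ldots,\degExp\}$, so $\shifts[d]+(\degExp,\ldots,\degExp)$ is nonnegative with entries at most $\degExp$, giving $\sshifts[{\shifts[d]+(\degExp,\ldots,\degExp)}]\le \expand{\rdim}\degExp \in \bigO{\order}$. This places Step~4 in the balanced regime of~\cite{JeNeScVi15}: substituting $\expand{\rdim}\in\bigO{\rdim}$, order $\order$, and $\xi\in\bigO{\order}$ into~\eqref{eq:costJeNeScVi15} yields precisely the claimed bound.

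What remains is to verify that no other step dominates. In Step~3, the expanded matrix $\expand{\evMat}=\expandMat\mul\evMat$ has $\expand{\rdim}$ rows of the form $X^{k\degExp}\mul\row{e}_i$; relying on the Jordan-aware multiplication underpinning \algoname{ComputeResiduals} (Corollary~\ref{cor:residual}) this fits in the budget. In Step~5, $\mat{R}$ is $\shifts[d]$-reduced with $\shifts[d]$-row degree $\bzero$, so each of its entries has degree at most $\degExp$; inverting $\leadingMat[{\shifts[d]}]{\mat{R}}$ costs $\bigO{\expand{\rdim}^\expmatmul}$ and multiplying by $\mat{R}$ via block polynomial multiplication costs $\bigO{\rdim^{\expmatmul-1}\order}$. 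Step~6 is a linear-time combination of shifted columns. Aggregating, the Step~4 contribution is dominant and the stated bounds for both $\expmatmul>2$ and $\expmatmul=2$ follow.

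The main obstacle I anticipate is Step~3: since $\expandMat$ contains polynomial shifts $X^{k\degExp}$ with $k\degExp$ possibly as large as $\order$, a naive implementation would be too costly. The key is that $\mulmat$ is given in standard Jordan form, so $X^{k\degExp}\mul\row{e}_i=\row{e}_i\,\mulmat^{k\degExp}$ splits blockwise and can be computed by sharing powers across $k$ and across rows within the same block structure, matching exactly the preprocessing that already feeds \algoname{ComputeResiduals}.
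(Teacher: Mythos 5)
Your proposal is correct and follows essentially the same route as the paper: correctness is delegated to Lemma~\ref{lem:known_mindeg_pib} after noting that the constant shift offset in Step~\textbf{4} does not affect reducedness, and the cost analysis hinges on $\expand{\rdim}\le 2\rdim$ (via Lemma~\ref{lem:mindeg}) together with the bound $\sshifts[{\shifts[d]-\min(\shifts[d])}]\in\bigO{\order}$ that places the call to \algoname{MinimalInterpolationBasis} in the balanced regime of~\cite{JeNeScVi15}. Two small points worth tightening: for Step~\textbf{3} the paper's actual reference is Lemma~\ref{lem:residual_expand} (which invokes \cite[Proposition~6.1]{JeNeScVi15} on each product $X^{k\degExp}\mul\matrow{\evMat}{j}$), not Corollary~\ref{cor:residual} which concerns the residual $\intBasis^{(1)}\mul\evMat$ in the outer algorithm; and for Step~\textbf{5} the paper bounds the cost by linearizing the columns of $\mat{R}$ using the column-degree sum $\sshifts[\expand{\minDegs}]\le\order$, whereas you use the uniform entry-degree bound $\degExp$ — both give $\bigO{\rdim^{\expmatmul-1}\order}$ here, so the difference is immaterial.
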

\begin{proof}
  We focus on the case $\order \ge \rdim$; otherwise, a better cost bound can
  be achieved even without knowing $\minDegs$~\cite[Theorem~1.4]{JeNeScVi15}.
  The correctness of Algorithm~\ref{algo:known_mindeg_pib} follows from
  Lemma~\ref{lem:known_mindeg_pib}. We remark that it uses $\shifts[d] +
  (\degExp,\ldots,\degExp)$ rather than $\shifts[d]$ because the minimal
  interpolation basis algorithm in~\cite{JeNeScVi15} requires the input shift
  to have non-negative entries. Since adding a constant to every entry of
  $\shifts[d]$ does not change the notion of $\shifts[d]$-reducedness, the
  basis $\mat{R}$ obtained at Step~\textbf{4} is a $\shifts[d]$-minimal
  interpolation basis for $(\evMat,\mulmat)$.

  Concerning the cost bound, we will show that it is dominated by the time
  spent in Step~\textbf{4}. First, we prove that
  $\sshifts[{\shifts[d]-\min(\shifts[d])}] \in \bigO{\order}$, so that the cost
  of Step~\textbf{4} follows from~\cite[Theorem~1.5]{JeNeScVi15}. We have
  $\quoExp_i = 1 + \lfloor \minDeg_i / \degExp \rfloor \le 1 + \rdim \minDeg_i
  / \order$ for all $i$. Thus, $\expand{\rdim} = \quoExp_1 + \cdots +
  \quoExp_\rdim \le \rdim + \sum_{1\le i\le \rdim} \rdim \minDeg_i / \order \le
  2\rdim$ thanks to Lemma~\ref{lem:mindeg}. Then, since all entries of
  $\shifts[d]$ are in $\{-\degExp,\ldots,0\}$, we obtain
  $\sshifts[{\shifts[d]-\min(\shifts[d])}] \le \expand{\rdim} \degExp \le 2
  \rdim (1 + \order/\rdim) \le 4 \order$. 

  Step~\textbf{3} can be done in~$\bigO{ \rdim \polmultime{\order}
\log(\order)}$ operations according to Lemma~\ref{lem:residual_expand} below.

  Lemma~\ref{lem:minDeg_shift} proves that the sum of the column degrees of
  $\mat{R}$ is $\sshifts[\expand{\minDegs}] = \sshifts[\minDegs] \le \order$.
  Then, the product in Step~\textbf{5} can be done in
  $\bigO{\rdim^{\expmatmul-1} \order}$ operations, by first linearizing the
  columns of $\mat{R}$ into a $\expand{\rdim} \times \expand{\rdim} +
  \sshifts[\expand{\minDegs}]$ matrix over $\field$, then left-multiplying this
  matrix by $\leadingMat[{\shifts[d]}]{\mat{R}}^{-1}$ (itself computed using
  $\bigO{\rdim^\expmatmul}$ operations), and finally performing the inverse
  linearization.

  Because of the degrees in $\expand{\intBasis}$ and the definition of
  $\expandMat$, the output in Step~\textbf{6} can be formed without using any
  arithmetic operation.
\end{proof}

The efficient computation of $\expandMat \mul \evMat$ can be done with the
algorithm for computing residuals in~\cite[Section~6]{JeNeScVi15}.

\begin{lem}
  \label{lem:residual_expand}
  The product $\expandMat \mul \evMat$ at Step~\textbf{3} of
  Algorithm~\ref{algo:known_mindeg_pib} can be computed using $\bigO{\rdim
    \polmultime{\order} \log(\order)}$ operations in $\field$.
\end{lem}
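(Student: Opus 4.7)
The plan is to bypass any black-box call to a general residual algorithm and instead to exploit directly the very specific structure of $\expandMat$: each of its $\expand{\rdim}$ rows has exactly one nonzero entry, and that entry is a monomial $X^{j\degExp}$. Indexing the rows of $\expandMat$ by pairs $(i,j)$ with $1 \le i \le \rdim$ and $0 \le j \le \quoExp_i - 1$, the corresponding row of $\expandMat \mul \evMat$ is exactly $\row{e}_i \mulmat^{j\degExp}$, where $\row{e}_i$ denotes the $i$-th row of $\evMat$. Since the output has $\expand{\rdim} \le 2\rdim$ rows (by the analysis already used in the proof of Proposition~\ref{prop:known_mindeg_pib}), it will suffice to compute these vectors within an average budget of $\bigO{\polmultime{\order}\log(\order)}$ operations each.

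To do so, I would pass through the Jordan identification $\field^\order \cong \polSpace$ from Section~\ref{sec:intro}: writing $\row{e}_i$ as $(f_{i,1},\ldots,f_{i,\nbbl})$, the row $\row{e}_i \mulmat^{j\degExp}$ corresponds to the tuple $((X+\evpt_k)^{j\degExp} f_{i,k} \bmod X^{\szbl_k})_{1 \le k \le \nbbl}$. The whole product then reduces to modular multiplications $h_{j,k}(X)\,f_{i,k}(X) \bmod X^{\szbl_k}$, where I set $h_{j,k}(X) = (X+\evpt_k)^{j\degExp} \bmod X^{\szbl_k}$ and let $(i,j,k)$ range over the required indices.

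I would then organize the computation into two phases. First, for each $k$, precompute $h_{1,k}$ directly from the binomial expansion of $(X+\evpt_k)^\degExp$ in $\bigO{\min(\degExp,\szbl_k)}$ operations, and then obtain $h_{2,k},\ldots,h_{J-1,k}$ with $J = \max_i \quoExp_i \le 1 + \rdim$ via the recurrence $h_{\ell+1,k} = h_{1,k}\,h_{\ell,k} \bmod X^{\szbl_k}$. Second, for each $k$, compute the $\expand{\rdim}$ products $h_{j,k}\,f_{i,k} \bmod X^{\szbl_k}$. In both phases the super-linearity of $\polmultime{\cdot}$ gives $\sum_k \polmultime{\szbl_k} \le \polmultime{\order}$, so the total cost is $\bigO{\rdim\polmultime{\order}}$, well within the claimed bound.

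The main point to verify is that the uniform bounds $J \le 1+\rdim$ and $\expand{\rdim} \le 2\rdim$ do hold; these follow from $\minDeg_1 + \cdots + \minDeg_\rdim \le \order$ (Lemma~\ref{lem:mindeg}) together with $\degExp = \lceil \order/\rdim \rceil$, exactly as in the cost analysis of Proposition~\ref{prop:known_mindeg_pib}. A secondary care point is that the $\degExp + 1$ binomials $\binom{\degExp}{t}$ and the truncated powers of $\evpt_k$ needed to build each $h_{1,k}$ should be shared across all $k$ so that this preprocessing itself stays within $\bigO{\order}$ operations.
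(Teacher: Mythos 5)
Your approach is correct, and it is a genuinely different route from the one in the paper. The paper's proof is a one-liner: after noting that $\expandMat \mul \evMat$ has $\expand{\rdim} \le 2\rdim$ rows, each of the form $X^{i\degExp} \mul \matrow{\evMat}{j}$ with $i\degExp \le 2\order$, it simply invokes Proposition~6.1 of~\cite{JeNeScVi15} to multiply each row by its monomial in $\bigO{\polmultime{\order}\log(\order)}$ operations, and sums. You instead unfold the $\polRing$-module action on $\polSpace$ explicitly and notice that, since the multiplier is a monomial, each block~$k$ only needs the polynomials $h_{j,k}=(X+\evpt_k)^{j\degExp}\bmod X^{\szbl_k}$, which you build incrementally in~$j$ by one truncated product per step. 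That observation is what buys you the potential improvement: your two phases cost $\bigO{\rdim\polmultime{\order}}$, shaving a $\log(\order)$ factor off the dominant term, whereas the paper is content with the coarser $\bigO{\rdim\polmultime{\order}\log(\order)}$ because the black-box call to Proposition~6.1 is simpler to cite.

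One point in your preprocessing is stated a bit too optimistically. Over a field of positive characteristic, the binomial coefficients $\binom{\degExp}{t}$ cannot be produced by the usual ratio recurrence (division by $t+1$ may fail), so ``$\bigO{\min(\degExp,\szbl_k)}$ operations from the binomial expansion'' does not literally hold, nor does the global $\bigO{\order}$ estimate for the shared preprocessing: computing all $\binom{\degExp}{t}\bmod p$ and the leading powers $\evpt_k^{\degExp}$ for each of the up to~$\nbbl$ blocks already costs $\Omega(\nbbl\log\degExp)$. The clean fix is to compute each $h_{1,k}=(X+\evpt_k)^{\degExp}\bmod X^{\szbl_k}$ by binary exponentiation in $\bigO{\polmultime{\szbl_k}\log(\degExp)}$ operations, which sums to $\bigO{\polmultime{\order}\log(\order)}$ by super-linearity; this sidesteps the characteristic issue entirely and keeps the total comfortably within the stated $\bigO{\rdim\polmultime{\order}\log(\order)}$. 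With that adjustment the argument is complete.
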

\begin{proof}
  The product $\expandMat \mul \evMat$ has $\expand{\rdim}$ rows, with
  $\expand{\rdim} \le 2\rdim$ as above. Besides, by definition of $\expandMat$,
  each row of $\expandMat \mul \evMat$ is a product of the form $X^{i \degExp}
  \mul \matrow{\evMat}{j}$, where $0 \le i \le \rdim$, $1 \le j \le \rdim$, and
  $\matrow{\evMat}{j}$ denotes the row $j$ of $\evMat$. In particular, $i
  \degExp \le 2\order$: then, according to~\cite[Proposition~6.1]{JeNeScVi15},
  each of these $\expand{\rdim}$ products can be performed using
  $\bigO{\polmultime{\order} \log(\order)}$ operations in $\field$.
\end{proof}

This lemma and the partial linearization technique can also be used to compute
the residual at Step~\textbf{2.c} of Algorithm~\ref{algo:pib}, that is, a
product of the form $\mat{P} \mul \evMat$ with the sum of the column degrees of
$\mat{P}$ bounded by $\order$. First, we expand the high-degree columns of
$\mat{P}$ to obtain $\expand{\mat{P}} \in \polMatSpace[\rdim][\expand{\rdim}]$
of degree less than $\lceil\order/\rdim\rceil$ such that $\mat{P} =
\expand{\mat{P}} \expandMat$; then, we compute $\expand{\evMat} = \expandMat
\mul \evMat$; and finally we rely on the algorithm
in~\cite[Proposition~6.1]{JeNeScVi15} to compute $\expand{\mat{P}} \mul
\expand{\evMat} = \mat{P} \mul \evMat$ efficiently.

\begin{cor}
  \label{cor:residual}
Let $\evMat \in \matSpace[\rdim][\order]$ with $\order \ge \rdim$, and let
$\mulmat \in \matSpace[\order]$ be a Jordan matrix given by a standard
representation. Let $\mat{P} \in \polMatSpace[\rdim]$ with column degree
$(\minDeg_1,\ldots,\minDeg_\rdim)$ such that $\minDeg_1 + \cdots +
\minDeg_\rdim \le \order$. Then, the product $\mat{P} \mul \evMat$ can be
computed using $\bigO{ \rdim^{\expmatmul-1} \polmultime{\order} \log(\order) }$
operations in $\field$.
\end{cor}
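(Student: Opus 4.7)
The plan is to realize the three-step reduction sketched in the paragraph preceding the corollary: partially linearize the columns of $\mat{P}$ to balance their degrees, offload the expansion to a product $\expandMat \mul \evMat$ handled by Lemma~\ref{lem:residual_expand}, and reduce the remaining product to one treated by \cite[Proposition~6.1]{JeNeScVi15}.

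First I set $\degExp = \lceil \order/\rdim \rceil$ and, writing $\minDeg_j = (\quoExp_j - 1)\degExp + \remExp_j$ with $\quoExp_j \ge 1$ and $0 \le \remExp_j < \degExp$, define $\expand{\rdim} = \quoExp_1 + \cdots + \quoExp_\rdim$ and the expansion-compression matrix $\expandMat \in \polMatSpace[\expand{\rdim}][\rdim]$ exactly as in~\eqref{eqn:expandMat}. Slicing each column $j$ of $\mat{P}$ into $\quoExp_j$ consecutive chunks of degree less than $\degExp$ produces the unique $\expand{\mat{P}} \in \polMatSpace[\rdim][\expand{\rdim}]$ of degree less than $\degExp$ with $\mat{P} = \expand{\mat{P}} \expandMat$; this rearrangement requires no arithmetic operations. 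As in the proof of Proposition~\ref{prop:known_mindeg_pib}, the bounds $\quoExp_j \le 1 + \minDeg_j/\degExp$ and $\minDeg_1 + \cdots + \minDeg_\rdim \le \order$ combine to give $\expand{\rdim} \le 2\rdim$.

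Next I use associativity of the $\polRing$-module action to write $\mat{P} \mul \evMat = \expand{\mat{P}} \mul (\expandMat \mul \evMat)$. The intermediate matrix $\expand{\evMat} = \expandMat \mul \evMat$ has $\expand{\rdim} \le 2\rdim$ rows of the form $X^{i\degExp} \mul \matrow{\evMat}{j}$ with $i\degExp \le 2\order$, so Lemma~\ref{lem:residual_expand} applies verbatim and computes $\expand{\evMat}$ within $\bigO{\rdim \polmultime{\order} \log(\order)}$ operations. It then remains to compute $\expand{\mat{P}} \mul \expand{\evMat}$, where $\expand{\mat{P}}$ has $\rdim$ rows, $\expand{\rdim} \le 2\rdim$ columns, and degree less than $\degExp \in \bigO{\order/\rdim}$, and $\expand{\evMat} \in \matSpace[\expand{\rdim}][\order]$. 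Splitting $\expand{\mat{P}}$ columnwise into $\bigO{1}$ blocks of width at most $\rdim$ and $\expand{\evMat}$ rowwise accordingly, each block-product falls squarely into the setting of~\cite[Proposition~6.1]{JeNeScVi15}, at cost $\bigO{\rdim^{\expmatmul-1} \polmultime{\order} \log(\order)}$; summing over the constantly many blocks preserves this bound, which dominates the $\bigO{\rdim \polmultime{\order} \log(\order)}$ cost of computing $\expand{\evMat}$.

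There is no real obstacle beyond bookkeeping: the only point to verify is that the slight column-dimension inflation $\expand{\rdim} \in [\rdim, 2\rdim]$ does not affect the complexity bound of~\cite[Proposition~6.1]{JeNeScVi15}, which the $\bigO{1}$-block decomposition above settles immediately.
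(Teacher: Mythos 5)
Your proposal follows exactly the paper's route: partially linearize $\mat{P}$ as $\expand{\mat{P}}\expandMat$ with $\degExp=\lceil\order/\rdim\rceil$, compute $\expand{\evMat}=\expandMat\mul\evMat$ via Lemma~\ref{lem:residual_expand}, and then evaluate $\expand{\mat{P}}\mul\expand{\evMat}$ with \cite[Proposition~6.1]{JeNeScVi15}, using associativity of the module action. The only addition is the explicit $\bigO{1}$-block argument to absorb the inflation $\expand{\rdim}\le 2\rdim$, which is a harmless bookkeeping detail that the paper leaves implicit.
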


\noindent \textbf{Acknowledgments.} We thank B. Beckermann and G.
Labahn for their valuable comments, as well as an anonymous referee for suggesting a
shorter proof of Lemma~\ref{lem:invariant}. C.-P. Jeannerod and G. Villard
were partly supported by the ANR project HPAC (ANR 11 BS02 013). V.  Neiger was
supported by the international mobility grants Explo'ra Doc from \emph{R\'egion
Rh\^one-Alpes}, \emph{PALSE}, and \emph{Mitacs Globalink - Inria}. \'E.~Schost
was supported by NSERC.

\begin{small}

\end{small}

\appendix
\section{Reducing the entries of the shift}
\label{app:shift}

Let $\mat{A} \in \polMatSpace[\rdim]$ be nonsingular, let $\shifts \in
\shiftSpace$, and consider $\order \in \NN$ such that $\order >
\deg(\det(\mat{A}))$. Here, we show how to construct a shift $\shifts[t] \in
\NN^\rdim$ such that 
\begin{itemize}
  \item the $\shifts$-Popov form $\mat{P}$ of $\mat{A}$ is also in
    $\shifts[t]$-Popov form;
  \item $\min(\shifts[t]) = 0$, $\max(\shifts[t]) \le (\rdim-1) \order$,
    and $\sshifts[{\shifts[t]}] \le \rdim^2 \order / 2$.
\end{itemize}

We write $\shifts[\hat{s}] = (\shift{\pi(1)},\ldots,\shift{\pi(\rdim)})$ where
$\pi$ is a permutation of $\{1,\ldots,\rdim\}$ such that $\shifts[\hat{s}]$ is
non-decreasing. Then, we define $\shifts[\hat{t}] =
(\shift[\hat{t}]{1},\ldots,\shift[\hat{t}]{\rdim})$ by $\shift[\hat{t}]{1} = 0$
and, for $2 \le i \le \rdim$,
\[ \shift[\hat{t}]{i} - \shift[\hat{t}]{i-1} = \left\{ \begin{array}{ll}
  \order & \text{if } \shift[\hat{s}]{i} - \shift[\hat{s}]{i-1} \ge \order, \\
  \shift[\hat{s}]{i} - \shift[\hat{s}]{i-1} & \text{otherwise}.
  \end{array} \right.
\]
Let $\shifts[t] =
(\shift[\hat{t}]{\pi^{-1}(1)},\ldots,\shift[\hat{t}]{\pi^{-1}(\rdim)})$. Since
the diagonal entries of $\mat{P}$ have degree at most $\deg(\det(\mat{A})) <
\order$, we obtain that $\mat{P}$ is in $\shifts[t]$-diagonal weak Popov form
and thus in $\shifts[t]$-Popov form.

\section{Example of order basis with size beyond our target cost}
\label{app:large_size_mib}

We focus on a Hermite-Pad\'e approximation problem with input $\mat{F}$ of
dimensions $2\rdim \times 1$ as below, order $\order$ with $\order \ge \rdim$,
and shift $\shifts = (0,\ldots,0,\order,\ldots,\order) \in \NN^{2\rdim}$ with
$\rdim$ entries $0$ and $\rdim$ entries $\order$.

Let $f$ be a polynomial in $X$ with nonzero constant coefficient, and let
$f_1,\ldots,f_{\rdim}$ be generic polynomials in $X$ of degree less than
$\order$. Then, we consider the following input with all entries truncated
modulo $X^\order$:
\[\mat{F} =
  \trsp{[ f ,
    f + Xf ,
    X (f + X f) ,
    \cdots ,
    X^{\rdim-2} (f + X f) ,
    f_1 ,
    \cdots ,
  f_\rdim ]}.\]

After $\rdim$ steps, the iterative algorithm in~\cite{BecLab94} has computed an
$\shifts$-minimal basis $\intBasis^{(\rdim)}$ of approximants for $\mat{F}$ and
order~$\rdim$, which is such that $\shifts[t] =
\rdeg[\shifts]{\intBasis^{(\rdim)}} = (1,\ldots,1,\order,\ldots,\order)$ and
$ \intBasis^{(\rdim)} \mat{F} =
  \trsp{[  0 ,
    \cdots ,
    0 ,
    X^{\rdim} f ,
    X^{\rdim} g_1 ,
    \cdots ,
    X^{\rdim} g_\rdim
  ]} \bmod X^\order$,
for some polynomials $g_1,\ldots,g_\rdim$.

Now we finish the process up to order $\order$. Since the coefficient of degree
$\rdim$ of $X^\rdim f$ is nonzero and because of the specific shift
$\shifts[t]$, the obtained $\shifts$-minimal basis $\intBasis$ of approximants
for $\mat{F}$ has degree profile
\[
  \intBasis =
  \begin{bmatrix}
    [1] & [0]                                                 \\
    \vdots & \ddots & \ddots                         \\
    [1] & \cdots & [1] & [0]                            \\
    [d+1] & \cdots & [d+1] & [d+1] \\
    [d] & \cdots & [d] & [d] & [0]                      \\
    \vdots & \cdots & \vdots & \vdots & & & \ddots   \\
    [d] & \cdots & [d] & [d] & &   &     &     & [0]
  \end{bmatrix},
\]
where $d = \order-\rdim$, $[i]$ denotes an entry of degree $i$, the entries
left blank correspond to the zero polynomial, and the entries $[d+1]$ are on
the $\rdim$-th row. In particular, $\intBasis$ has size $\Theta(\rdim^2
\order)$.


\begin{thebibliography}{10}

\bibitem{BecLab94}
B.~Beckermann and G.~Labahn.
\newblock A uniform approach for the fast computation of matrix-type {P}ad\'e
  approximants.
\newblock {\em SIAM J. Matrix Anal. Appl.}, 15(3):804--823, 1994.

\bibitem{BecLab00}
B.~Beckermann and G.~Labahn.
\newblock Fraction-free computation of matrix rational interpolants and matrix
  gcds.
\newblock {\em SIAM J. Matrix Anal. Appl.}, 22(1):114--144, 2000.

\bibitem{BeLaVi06}
B.~Beckermann, G.~Labahn, and G.~Villard.
\newblock Normal forms for general polynomial matrices.
\newblock {\em J. Symbolic Comput.}, 41(6):708--737, 2006.

\bibitem{BeeBra10}
P.~Beelen and K.~Brander.
\newblock Key equations for list decoding of {R}eed-{S}olomon codes and how to
  solve them.
\newblock {\em J. Symbolic Comput.}, 45(7):773--786, 2010.

\bibitem{BoJeSc08}
A.~Bostan, C.-P. Jeannerod, and \'{E}. Schost.
\newblock Solving structured linear systems with large displacement rank.
\newblock {\em Theor. Comput. Sci.}, 407(1-3):155--181, 2008.

\bibitem{KB10}
K.~Brander.
\newblock {\em Interpolation and List Decoding of Algebraic Codes}.
\newblock PhD thesis, Technical University of Denmark, 2010.

\bibitem{PB08}
P.~Busse.
\newblock {\em Multivariate List Decoding of Evaluation Codes with a
  {G}r\"obner Basis Perspective}.
\newblock PhD thesis, University of Kentucky, 2008.

\bibitem{CanKal91}
D.~G. Cantor and E.~Kaltofen.
\newblock On fast multiplication of polynomials over arbitrary algebras.
\newblock {\em Acta Inform.}, 28(7):693--701, 1991.

\bibitem{CJNSV15}
M.~Chowdhury, C.-P. Jeannerod, V.~Neiger, \'E. Schost, and G.~Villard.
\newblock Faster algorithms for multivariate interpolation with multiplicities
  and simultaneous polynomial approximations.
\newblock {\em IEEE Trans. Inf. Theory}, 61(5):2370--2387, 2015.

\bibitem{CohHen12}
H.~Cohn and N.~Heninger.
\newblock Approximate common divisors via lattices.
\newblock In {\em Tenth Algorithmic Number Theory Symposium}, pages 271--293.
  Mathematical Sciences Publishers (MSP), 2012-2013.

\bibitem{CopWin90}
D.~Coppersmith and S.~Winograd.
\newblock Matrix multiplication via arithmetic progressions.
\newblock {\em J. Symbolic Comput.}, 9(3):251--280, 1990.

\bibitem{DeGoHe12a}
C.~Devet, I.~Goldberg, and N.~Heninger.
\newblock Optimally robust private information retrieval.
\newblock In {\em USENIX Security 12}, pages 269--283. USENIX, 2012.

\bibitem{vzGathen13}
{J. von zur} Gathen and J.~Gerhard.
\newblock {\em Modern Computer Algebra (third edition)}.
\newblock Cambridge University Press, 2013.

\bibitem{GiJeVi03}
P.~Giorgi, C.-P. Jeannerod, and G.~Villard.
\newblock On the complexity of polynomial matrix computations.
\newblock In {\em ISSAC'03}, pages 135--142. ACM, 2003.

\bibitem{GupSto11}
S.~Gupta and A.~Storjohann.
\newblock Computing {Hermite} forms of polynomial matrices.
\newblock In {\em ISSAC'11}, pages 155--162. ACM, 2011.

\bibitem{GurRud08}
V.~Guruswami and A.~Rudra.
\newblock Explicit codes achieving list decoding capacity: Error-correction
  with optimal redundancy.
\newblock {\em IEEE Trans. Inf. Theory}, 54(1):135--150, 2008.

\bibitem{GurSud99}
V.~Guruswami and M.~Sudan.
\newblock Improved decoding of {{R}eed-{S}olomon} and algebraic-geometry codes.
\newblock {\em IEEE Trans. Inf. Theory}, 45(6):1757--1767, 1999.

\bibitem{JeNeScVi15}
C.-P. Jeannerod, V.~Neiger, \'E. Schost, and G.~Villard.
\newblock Computing minimal interpolation bases.
\newblock HAL Open archive - https://hal.inria.fr/hal-01241781, 2015.

\bibitem{Kailath80}
T.~Kailath.
\newblock {\em {Linear Systems}}.
\newblock Prentice-Hall, 1980.

\bibitem{Knu70}
D.~E. Knuth.
\newblock The analysis of algorithms.
\newblock In {\em {Congr\`es int. Math., Nice, France}}, volume~3, pages
  269--274, 1970.

\bibitem{KoeVar03a}
R.~Koetter and A.~Vardy.
\newblock Algebraic soft-decision decoding of {{R}eed-{S}olomon} codes.
\newblock {\em IEEE Trans. Inf. Theory}, 49(11):2809--2825, 2003.

\bibitem{LeGall14}
F.~Le~Gall.
\newblock Powers of tensors and fast matrix multiplication.
\newblock In {\em ISSAC'14}, pages 296--303. ACM, 2014.

\bibitem{Moe73}
R.~T. Moenck.
\newblock Fast computation of {GCD}s.
\newblock In {\em Proc. 5th ACM Symp. Theory Comp.}, pages 142--151, 1973.

\bibitem{MulSto03}
T.~Mulders and A.~Storjohann.
\newblock On lattice reduction for polynomial matrices.
\newblock {\em J. Symbolic Comput.}, 35:377--401, 2003.

\bibitem{OlsSho99}
V.~Olshevsky and M.~A. Shokrollahi.
\newblock A displacement approach to efficient decoding of algebraic-geometric
  codes.
\newblock In {\em STOC'99}, pages 235--244. ACM, 1999.

\bibitem{ParVar05}
F.~Parvaresh and A.~Vardy.
\newblock Correcting errors beyond the {G}uruswami-{S}udan radius in polynomial
  time.
\newblock In {\em FOCS'05}, pages 285--294. IEEE, 2005.

\bibitem{RotRuc00}
R.~M. Roth and G.~Ruckenstein.
\newblock Efficient decoding of {R}eed-{S}olomon codes beyond half the minimum
  distance.
\newblock {\em IEEE Trans. Inf. Theory}, 46(1):246--257, 2000.

\bibitem{SarSto11}
S.~Sarkar and A.~Storjohann.
\newblock Normalization of row reduced matrices.
\newblock In {\em ISSAC'11}, pages 297--304. ACM, 2011.

\bibitem{Sch71}
A.~Sch{\"o}nhage.
\newblock Schnelle {Berechnung} {von} {Kettenbruchentwicklungen}.
\newblock {\em Acta Inform.}, 1:139--144, 1971.

\bibitem{Storjohann06}
A.~Storjohann.
\newblock Notes on computing minimal approximant bases.
\newblock In {\em Dagstuhl Seminar Proceedings}, 2006.

\bibitem{BarBul92}
M.~Van~Barel and A.~Bultheel.
\newblock A general module theoretic framework for vector {M-Pad\'e} and matrix
  rational interpolation.
\newblock {\em Numer. Algorithms}, 3:451--462, 1992.

\bibitem{ZeGeAu11}
A.~Zeh, C.~Gentner, and D.~Augot.
\newblock An interpolation procedure for list decoding {R}eed-{S}olomon codes
  based on generalized key equations.
\newblock {\em IEEE Trans. Inf. Theory}, 57(9):5946--5959, 2011.

\bibitem{Zhou12}
W.~Zhou.
\newblock {\em Fast Order Basis and Kernel Basis Computation and Related
  Problems}.
\newblock PhD thesis, University of Waterloo, 2012.

\bibitem{ZhoLab12}
W.~Zhou and G.~Labahn.
\newblock Efficient algorithms for order basis computation.
\newblock {\em J. Symbolic Comput.}, 47(7):793--819, 2012.

\end{thebibliography}
\end{document}